\newtheorem{theorem}{Theorem}[section]
\newtheorem{proposition}{Proposition}[section]
\newtheorem{lemma}{Lemma}[section]
\theoremstyle{remark}
\newtheorem*{remark}{Remark}
\theoremstyle{definition}
\newtheorem{definition}{Definition}[section]
\begin{document}
%
\title{Linear Runlength-Limited Subcodes of Reed-Muller Codes and Coding Schemes for Input-Constrained BMS Channels}
%
%
%
\author{V.~Arvind Rameshwar \ and \ Navin Kashyap
	\thanks{The authors are with the Department of Electrical Communication Engineering, Indian Institute of Science, Bengaluru 560012. Email: \{\texttt{vrameshwar}, \texttt{nkashyap}\}\texttt{@iisc.ac.in}}
	\thanks{The work of V.~A.~Rameshwar was supported by a Prime Minister's Research Fellowship, from the Ministry of Education, Govt. of India.}
}
\IEEEoverridecommandlockouts

\maketitle

\begin{abstract}
In this work, we address the question of the largest rate of linear subcodes of Reed-Muller (RM) codes, all of whose codewords respect a runlength-limited (RLL) constraint. Our interest is in the $(d,\infty)$-RLL constraint, which mandates that every pair of successive $1$s be separated by at least $d$ $0$s. Consider any sequence $\{{\mathcal{C}_m}\}_{m\geq 1}$ of RM codes with increasing blocklength, whose rates approach $R$, in the limit as the blocklength goes to infinity. We show that for any linear $(d,\infty)$-RLL subcode, $\hat{\mathcal{C}}_m$, of the code $\mathcal{C}_m$, it holds that the rate of $\hat{\mathcal{C}}_m$ is at most $\frac{R}{d+1}$, in the limit as the blocklength goes to infinity. We also consider scenarios where the coordinates of the RM codes are not ordered according to the standard lexicographic ordering, and derive rate upper bounds for linear $(d,\infty)$-RLL subcodes, in those cases as well. Next, for the setting of a $(d,\infty)$-RLL input-constrained binary memoryless symmetric (BMS) channel, we devise a new coding scheme, based on cosets of RM codes. Again, in the limit of blocklength going to infinity, this code outperforms any linear subcode of an RM code, in terms of rate, for low noise regimes of the channel.
\end{abstract}


%
\IEEEpeerreviewmaketitle

\section{Introduction}
\label{sec:intro}
%
%
%
%
The physical limitations of hardware used in most data recording and communication systems cause some sequences to be more prone to error than others.  Constrained coding is a method of alleviating this problem, by encoding arbitrary user data sequences into sequences that respect a constraint (see, for example, \cite{Roth} or \cite{Immink}). In this work, we investigate the sizes of linear subcodes of well-known families of codes, all of whose codewords obey a certain hard constraint. In particular, we work with the binary Reed-Muller (RM) family of codes and obtain upper bounds on the sizes of linear subcodes that obey a runlength-limited (RLL) constraint.

The specific hard constraint of interest to us is the $(d,\infty)$-RLL constraint, which admits only binary sequences with at least $d$ $0$s between every pair of successive $1$s. Figure 2 shows a state transition graph that represents the constraint. This constraint is a special case of the $(d,k)$-RLL constraint, which admits only binary sequences with at least $d$ and at most $k$ $0$s between successive $1$s. 
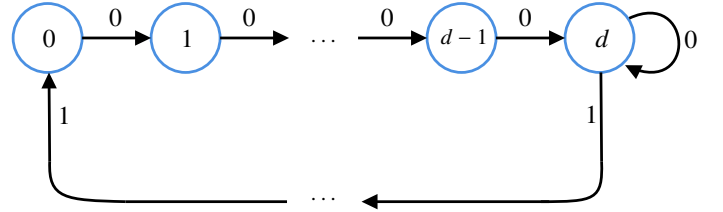
\begin{figure}[!h]

\begin{center}
\resizebox{0.5\textwidth}{!}{
\tikzset{every picture/.style={line width=0.75pt}} 

\begin{tikzpicture}[x=0.75pt,y=0.75pt,yscale=-1,xscale=1]


\draw  [color={rgb, 255:red, 74; green, 144; blue, 226 }  ,draw opacity=1 ][line width=1.5]  (130,96) .. controls (130,82.19) and (141.19,71) .. (155,71) .. controls (168.81,71) and (180,82.19) .. (180,96) .. controls (180,109.81) and (168.81,121) .. (155,121) .. controls (141.19,121) and (130,109.81) .. (130,96) -- cycle ;
\draw  [color={rgb, 255:red, 74; green, 144; blue, 226 }  ,draw opacity=1 ][line width=1.5]  (230,96) .. controls (230,82.19) and (241.19,71) .. (255,71) .. controls (268.81,71) and (280,82.19) .. (280,96) .. controls (280,109.81) and (268.81,121) .. (255,121) .. controls (241.19,121) and (230,109.81) .. (230,96) -- cycle ;
\draw  [color={rgb, 255:red, 74; green, 144; blue, 226 }  ,draw opacity=1 ][line width=1.5]  (430,96) .. controls (430,82.19) and (441.19,71) .. (455,71) .. controls (468.81,71) and (480,82.19) .. (480,96) .. controls (480,109.81) and (468.81,121) .. (455,121) .. controls (441.19,121) and (430,109.81) .. (430,96) -- cycle ;
\draw  [color={rgb, 255:red, 74; green, 144; blue, 226 }  ,draw opacity=1 ][line width=1.5]  (530,96) .. controls (530,82.19) and (541.19,71) .. (555,71) .. controls (568.81,71) and (580,82.19) .. (580,96) .. controls (580,109.81) and (568.81,121) .. (555,121) .. controls (541.19,121) and (530,109.81) .. (530,96) -- cycle ;
\draw [line width=1.5]    (180,96) -- (226,96) ;
\draw [shift={(230,96)}, rotate = 180] [fill={rgb, 255:red, 0; green, 0; blue, 0 }  ][line width=0.08]  [draw opacity=0] (11.61,-5.58) -- (0,0) -- (11.61,5.58) -- cycle    ;
\draw [line width=1.5]    (280,96) -- (326,96) ;
\draw [shift={(330,96)}, rotate = 180] [fill={rgb, 255:red, 0; green, 0; blue, 0 }  ][line width=0.08]  [draw opacity=0] (11.61,-5.58) -- (0,0) -- (11.61,5.58) -- cycle    ;
\draw [line width=1.5]    (380,96) -- (426,96) ;
\draw [shift={(430,96)}, rotate = 180] [fill={rgb, 255:red, 0; green, 0; blue, 0 }  ][line width=0.08]  [draw opacity=0] (11.61,-5.58) -- (0,0) -- (11.61,5.58) -- cycle    ;
\draw [line width=1.5]    (480,96) -- (526,96) ;
\draw [shift={(530,96)}, rotate = 180] [fill={rgb, 255:red, 0; green, 0; blue, 0 }  ][line width=0.08]  [draw opacity=0] (11.61,-5.58) -- (0,0) -- (11.61,5.58) -- cycle    ;
\draw [line width=1.5]    (556,121) -- (556.5,185.73) ;
\draw [line width=1.5]    (156.03,125) -- (156.5,185.73) ;
\draw [shift={(156,121)}, rotate = 89.56] [fill={rgb, 255:red, 0; green, 0; blue, 0 }  ][line width=0.08]  [draw opacity=0] (11.61,-5.58) -- (0,0) -- (11.61,5.58) -- cycle    ;
\draw [line width=1.5]    (156.5,185.73) .. controls (155.5,212.73) and (167.5,213.73) .. (211.5,214.73) ;
\draw [line width=1.5]    (556.5,185.73) .. controls (557.5,213.73) and (551.5,214.73) .. (512.5,214.73) ;
\draw [line width=1.5]    (211.5,214.73) -- (328.5,214.73) ;
\draw [line width=1.5]    (386.5,214.73) -- (518.5,214.73) ;
\draw [shift={(382.5,214.73)}, rotate = 0] [fill={rgb, 255:red, 0; green, 0; blue, 0 }  ][line width=0.08]  [draw opacity=0] (11.61,-5.58) -- (0,0) -- (11.61,5.58) -- cycle    ;
\draw [line width=1.5]    (576.5,81.73) .. controls (622.56,60.17) and (626.36,138.5) .. (576.61,117.17) ;
\draw [shift={(573.5,115.73)}, rotate = 386.13] [fill={rgb, 255:red, 0; green, 0; blue, 0 }  ][line width=0.08]  [draw opacity=0] (11.61,-5.58) -- (0,0) -- (11.61,5.58) -- cycle    ;

\draw (344,96) node [anchor=north west][inner sep=0.75pt]  [font=\large]  {$\dotsc $};
\draw (344,210.7) node [anchor=north west][inner sep=0.75pt]  [font=\large]  {$\dotsc $};
\draw (150,88.4) node [anchor=north west][inner sep=0.75pt] [font=\Large]   {$0$};
\draw (250,88.4) node [anchor=north west][inner sep=0.75pt]   [font=\Large] {$1$};
\draw (438,88.4) node [anchor=north west][inner sep=0.75pt]   [font=\large] {$d-1$};
\draw (550,88.4) node [anchor=north west][inner sep=0.75pt]  [font=\Large]  {$d$};
\draw (198,73.4) node [anchor=north west][inner sep=0.75pt]  [font=\Large]  {$0$};
\draw (295,73.4) node [anchor=north west][inner sep=0.75pt]  [font=\Large]  {$0$};
\draw (395,73.4) node [anchor=north west][inner sep=0.75pt]  [font=\Large]  {$0$};
\draw (495,73.4) node [anchor=north west][inner sep=0.75pt]  [font=\Large]  {$0$};
\draw (615,89.4) node [anchor=north west][inner sep=0.75pt] [font=\Large]   {$0$};
\draw (543,143.4) node [anchor=north west][inner sep=0.75pt]  [font=\Large]  {$1$};
\draw (161,144.4) node [anchor=north west][inner sep=0.75pt] [font=\Large]   {$1$};

\end{tikzpicture}
}
\end{center}
\label{fig:d_inf}
\caption{The state transition graph for the $(d,\infty)$-RLL constraint.}
\end{figure}

One of the motivations for studying this problem is the design of explicit coding schemes that achieve good rates over input-constrained discrete memoryless channels (DMCs). 
Figure \ref{fig:gen_const_DMC} shows a generic binary memoryless symmetric (BMS) channel with input constraints. Input-constrained DMCs in general fall under the broad class of discrete finite-state channels (DFSCs, or FSCs).

\begin{figure}[!t]
	\centering
	\resizebox{0.5\textwidth}{!}{

		\tikzset{every picture/.style={line width=0.75pt}} 
		
		\begin{tikzpicture}[x=0.75pt,y=0.75pt,yscale=-1,xscale=1]
			
			\draw   (321,106.32) -- (391,106.32) -- (391,174.32) -- (321,174.32) -- cycle ;
			\draw    (390.5,137.65) -- (438.5,137.65) ;
			\draw [shift={(440.5,137.65)}, rotate = 180] [color={rgb, 255:red, 0; green, 0; blue, 0 }  ][line width=0.75]    (10.93,-3.29) .. controls (6.95,-1.4) and (3.31,-0.3) .. (0,0) .. controls (3.31,0.3) and (6.95,1.4) .. (10.93,3.29)   ;
			\draw   (441,105) -- (511,105) -- (511,180) -- (441,180) -- cycle ;
			\draw   (561,117.65) -- (631,117.65) -- (631,157.65) -- (561,157.65) -- cycle ;
			\draw    (510.5,137.65) -- (558.5,137.65) ;
			\draw [shift={(560.5,137.65)}, rotate = 180] [color={rgb, 255:red, 0; green, 0; blue, 0 }  ][line width=0.75]    (10.93,-3.29) .. controls (6.95,-1.4) and (3.31,-0.3) .. (0,0) .. controls (3.31,0.3) and (6.95,1.4) .. (10.93,3.29)   ;
			\draw    (630.5,137.65) -- (678.5,137.65) ;
			\draw [shift={(680.5,137.65)}, rotate = 180] [color={rgb, 255:red, 0; green, 0; blue, 0 }  ][line width=0.75]    (10.93,-3.29) .. controls (6.95,-1.4) and (3.31,-0.3) .. (0,0) .. controls (3.31,0.3) and (6.95,1.4) .. (10.93,3.29)   ;
			\draw    (250.5,136.65) -- (318.5,136.65) ;
			\draw [shift={(320.5,136.65)}, rotate = 180] [color={rgb, 255:red, 0; green, 0; blue, 0 }  ][line width=0.75]    (10.93,-3.29) .. controls (6.95,-1.4) and (3.31,-0.3) .. (0,0) .. controls (3.31,0.3) and (6.95,1.4) .. (10.93,3.29)   ;

			\draw (461,151.01) node [anchor=north west][inner sep=0.75pt]  [font=\normalsize]  {$P_{Y|X}$};
			\draw (649,121.01) node [anchor=north west][inner sep=0.75pt]  [font=\normalsize]  {$\hat{m}$};
			\draw (529,120.01) node [anchor=north west][inner sep=0.75pt]  [font=\normalsize]  {$y^{n}$};
			\draw (409,121.01) node [anchor=north west][inner sep=0.75pt]  [font=\normalsize]  {$x^{n}$};
			\draw (571,132.61) node [anchor=north west][inner sep=0.75pt]  [font=\normalsize] [align=left] {Decoder};
			\draw (323,124) node [anchor=north west][inner sep=0.75pt]   [align=left] {{\normalsize Constrained}\\{\normalsize \ \ Encoder}};
			\draw (461,123) node [anchor=north west][inner sep=0.75pt]   [align=left] {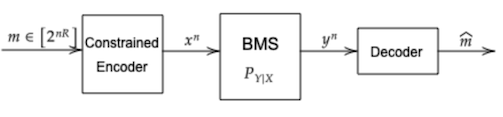};
			\draw (255,116) node [anchor=north west][inner sep=0.75pt]  [font=\normalsize]  {${\displaystyle m\in \left[ 2^{nR}\right]}$};

		\end{tikzpicture}
}	
	\caption{System model of an input-constrained binary memoryless symmetric (BMS) channel without feedback.}
	\label{fig:gen_const_DMC}
\end{figure}

While explicit codes achieving the capacities or whose rates are very close to the capacities of unconstrained DMCs have been derived in works such as \cite{polar, kud1, luby, ru1, kud2}, the problem of designing coding schemes for input-constrained DMCs has not received much attention in the literature. Moreover, unlike the case of the unconstrained DMC, whose capacity is characterized by Shannon's  single-letter, computable formula, $C_{\text{DMC}} = \sup_{P(x)} I(X;Y)$, the explicit computation of the capacity of an FSC is a much more difficult problem to tackle. 

With the recent result of Reeves and Pfister \cite{Reeves} that Reed-Muller (RM) codes achieve the capacity of the unconstrained BMS channel under bit-MAP decoding, there opens the possibility of using such algebraic codes over input-constrained BMS channels as well. Suppose that $C$ is the capacity of the unconstrained channel.
The authors in \cite{arnk22arxiv} showed a simple linear coding scheme, using subcodes of RM codes, with rates of $C\cdot{2^{-\left \lceil \log_2(d+1)\right \rceil}}$, being achievable over $(d,\infty)$-RLL input-constrained BMS channels. In this paper, we prove that 
any linear RM subcode that respects the $(d,\infty)$-RLL constraint, must have a rate of at most $\frac{C}{d+1}$, in the limit as the blocklength goes to infinity. In doing so, we show that one cannot do better, asymptotically, than the simple coding scheme in \cite{arnk22arxiv}, if one requires that the subcodes be linear. We also consider the rates achieved using linear $(d,\infty)$-RLL subcodes of permuted RM codes, and show that for codes of large enough blocklength, almost all permutations must respect an upper bound of $\frac{C}{d+1}+\delta$, for $\delta$ being as small as is required.

As an improvement over the rates achievable using linear $(d,\infty)$-RLL subcodes of RM codes, we propose a new coding scheme that uses cosets of RM codes. The rate achieved by this scheme is $\frac{C_0\cdot C^2\cdot 2^{-\left \lceil \log_2(d+1)\right \rceil}}{C^2\cdot 2^{-\left \lceil \log_2(d+1)\right \rceil} + 1-C+\epsilon}$, where $C_0$ is the noiseless capacity of the input constraint, and $\epsilon>0$ can be taken to be as small as is required. For example, when $d=1$, the rates achieved using this cosets-based scheme are better than those achieved by any scheme that uses linear $(1,\infty)$-RLL subcodes of RM codes, when $C\gtrapprox 0.7613$. Moreover, as the capacity of the channel approaches $1$, i.e., as the channel noise approaches $0$, the rate achieved by our cosets-based scheme approaches a value arbitrarily close to $C_0$, which is the largest rate achievable, at zero noise, given the constraint.


Our results supplement the analysis in \cite{pvk}, on rates achievable by $(d,k)$-RLL subcodes of cosets of a linear block code. Specifically, Corollary 1 of \cite{pvk} provides an existence result on cosets of capacity-achieving (over the unconstrained BMS channel) codes, whose constrained subcodes have rate at least $C_0 + C -1$. The coding scheme in this paper achieves rates close to the lower bound in \cite{pvk}, for values of $C$ close to $1$. We note that using linear $(d,\infty)$-RLL subcodes of RM codes as in \cite{arnk22arxiv}, we can achieve larger rates as compared to the rate lower bound in \cite{pvk}, when the capacity $C$ is low, i.e., when $C<(1-C_0)\cdot \left(1-{2^{-\left \lceil \log_2(d+1)\right \rceil}}\right)^{-1}$. 


The remainder of the paper is organized as follows: Section \ref{sec:notation} introduces the notation and provides the necessary background. Section \ref{sec:main} states our main results. Section \ref{sec:rmublin} discusses upper bounds on the rate achievable over the BMS channel, using linear $(d,\infty)$-RLL subcodes. In Section \ref{sec:perm}, the question of upper bounds on rates achievable using linear $(d,\infty)$-RLL subcodes, under coordinate orderings different from the standard lexicographic ordering, is taken up. Section \ref{sec:cosets} then discusses a construction that uses cosets of RM codes to achieve good rates.
Finally, Section \ref{sec:conclusion} contains concluding remarks and a discussion on possible future work.

\section{Notation and Preliminaries}
\label{sec:notation}
\subsection{Notation}

Random variables will be denoted by capital letters, and their realizations by lower-case letters, e.g., $X$ and $x$, respectively. Calligraphic letters, e.g., $\mathscr{X}$, denote sets. The notation $[n]$ denotes the set, $\{1,2,\ldots,n\}$, of integers, and the notation $[a:b]$, for $a<b$, denotes the set of integers $\{a,a+1,\ldots,b\}$. Moreover, for a real number $x$, we use $\left \lfloor x \right \rfloor$ to denote the largest integer smaller than or equal to $x$. For vectors $\mathbf{w}$ and $\mathbf{v}$ of length $n$ and $m$, respectively, we denote their concatenation by the $(m+n)$-length vector, $\mathbf{w}\mathbf{v}$. The notation 
$x^N$ denotes the vector $(x_1,\ldots,x_N)$. We also use the notation $\mathbf{e}_i^{(n)}$ to denote the standard basis vector of length $n$, with a $1$ at position $i$, and $0$s elsewhere, for $i\in [n]$. Further, we denote by $S_{(d,\infty)}^{(n)}$, the set of all $n$-length binary words that respect the $(d,\infty)$-RLL constraint, and we set $S_{(d,\infty)}=\bigcup_{n\geq 1} S_{(d,\infty)}^{(n)}$. 

All logarithms are to the base $2$. Throughout, we use the convenient notation $\binom{m}{\le r}$ to denote the summation $\sum\limits_{i=0}^r \binom{m}{i}$, and the notation $\binom{m}{\ge r}$ to denote $\sum\limits_{i=r}^m \binom{m}{i}$.

\subsection{Reed-Muller Codes}
\label{sec:introrm}
We recall the definition of the binary Reed-Muller (RM) family of codes. Codewords of binary RM codes consist of the evaluation vectors of multivariate polynomials over the binary field $\mathbb{F}_2$. Consider the polynomial ring $\mathbb{F}_2[x_1,x_2,\ldots,x_m]$ in $m$ variables. Note that in the specification of a polynomial $f\in \mathbb{F}_2[x_1,x_2,\ldots,x_m]$, only monomials of the form $\prod_{j\in S} x_j$, for some $S\subseteq [m]$, need to be considered, since $x^2 = x$ over the field $\mathbb{F}_2$, for an indeterminate $x$. For a polynomial $f\in \mathbb{F}_2[x_1,x_2,\ldots,x_m]$ and a binary vector $\mathbf{z} = (z_1,\ldots,z_m)\in \mathbb{F}_2^m$, let Eval$_\mathbf{z}(f):=f(z_1,\ldots,z_m)$. We let the evaluation points be ordered according to the standard lexicographic order on strings in $\mathbb{F}_2^m$, i.e., if $\mathbf{z} = (z_1,\ldots,z_m)$ and $\mathbf{z}^{\prime} = (z_1^{\prime},\ldots,z_m^{\prime})$ are two distinct evaluation points, then, $\mathbf{z}$ occurs before $\mathbf{z}^{\prime}$ in our ordering if and only if, for some $i\geq 1$, it holds that $z_j = z_j^{\prime}$ for all $j<i$, and $z_i < z_i^{\prime}$. Now, let Eval$(f):=\left(\text{Eval}_\mathbf{z}(f):\mathbf{z}\in \mathbb{F}_2^m\right)$ be the evaluation vector of $f$, where the coordinates $\mathbf{z}$ are ordered according to the standard lexicographic order. 

\begin{definition}[see \cite{mws}, Chap. 13, or \cite{rm_survey}]
	The $r^{\text{th}}$ order binary Reed-Muller code RM$(m,r)$ is defined as the set of binary vectors:
	\[
	\text{RM}(m,r):=\bigl\{\text{Eval}(f): f\in \mathbb{F}_2[x_1,x_2,\ldots,x_m],\ \text{deg}(f)\leq r\bigr\},
	\]
	where $\text{deg}(f)$ is the degree of the largest monomial in $f$, and the degree of a monomial $\prod_{j\in S} x_j$ is simply $|S|$. 
\end{definition}

It is well-known that RM$(m,r)$ has dimension $\binom{m}{\le r}$ and minimum Hamming distance $2^{m-r}$. The weight of a codeword $\mathbf{c} = \text{Eval}(f)$ is the number of $1$s in its evaluation vector, i.e,
\[
\text{wt}\left(\text{Eval}(f)\right):=|\{\mathbf{z}\in \mathbb{F}_2^m: f(\mathbf{z})=1\}|.
\]
In what follows, we let $G_{\text{Lex}}(m,r)$ be the generator matrix of $\text{RM}(m,r)$ consisting of rows that are the evaluations, in the lexicographic order, of monomials of degree less than or equal to $r$. The columns of $G_{\text{Lex}}(m,r)$ will be indexed by $m$-tuples $\mathbf{b} = (b_1,\ldots,b_m)$ in the lexicographic order.


\subsection{Codes for BMS Channels}

The communication setting of an input-constrained binary memoryless symmetric (BMS) channel without feedback is shown in Figure 2. A message $M$ is drawn uniformly from the set $\{1,2,\ldots,2^{nR}\}$, and is made available to the constrained encoder. The encoder produces a binary input sequence $x^n \in \{0,1\}^n = \mathscr{X}^n$, which is constrained to obey the $(d,\infty)$-RLL input constraint, a state transition graph for which is shown in Figure 2. Note that $d=0$ corresponds to the absence of any constraint.

The channel output alphabet is the extended real line, i.e., $\mathscr{Y} = \overline{\mathbb{R}}$. The channel is memoryless in the sense that $P(y_i|x^{i},y^{i-1}) = P(y_i|x_i)$, for all $i$. Further, the channel is symmetric, in that $P(y|1) = P(-y|0)$, for all $y\in \mathscr{Y}$. 
Common examples of BMS channels include the binary erasure channel (BEC$(\epsilon)$), the binary symmetric channel (BSC), and the binary additive white Gaussian noise (BI-AWGN) channel. Figures \ref{fig:bec} and \ref{fig:bsc} depict the BEC and BSC, pictorially. 

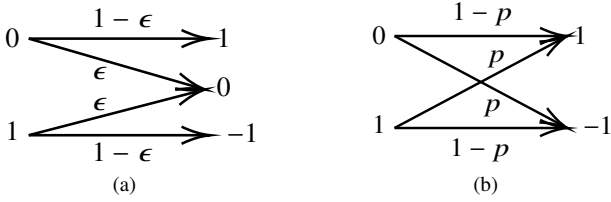
\begin{figure}[!h]
 \centering
\subfloat[]{
\resizebox{0.21\textwidth}{!}{

\tikzset{every picture/.style={line width=0.75pt}} 

\begin{tikzpicture}[x=0.75pt,y=0.75pt,yscale=-1,xscale=1]
	
	\draw    (454.24,136.67) -- (521.85,136.67) ;
	\draw [shift={(523.85,136.67)}, rotate = 180] [color={rgb, 255:red, 0; green, 0; blue, 0 }  ][line width=0.75]    (10.93,-3.29) .. controls (6.95,-1.4) and (3.31,-0.3) .. (0,0) .. controls (3.31,0.3) and (6.95,1.4) .. (10.93,3.29)   ;
	\draw    (454.24,136.67) -- (519.94,155.67) ;
	\draw [shift={(521.86,156.22)}, rotate = 196.13] [color={rgb, 255:red, 0; green, 0; blue, 0 }  ][line width=0.75]    (10.93,-3.29) .. controls (6.95,-1.4) and (3.31,-0.3) .. (0,0) .. controls (3.31,0.3) and (6.95,1.4) .. (10.93,3.29)   ;
	\draw    (454.24,173.79) -- (521.85,173.79) ;
	\draw [shift={(523.85,173.79)}, rotate = 180] [color={rgb, 255:red, 0; green, 0; blue, 0 }  ][line width=0.75]    (10.93,-3.29) .. controls (6.95,-1.4) and (3.31,-0.3) .. (0,0) .. controls (3.31,0.3) and (6.95,1.4) .. (10.93,3.29)   ;
	\draw    (454.24,173.79) -- (519.93,156.72) ;
	\draw [shift={(521.86,156.22)}, rotate = 165.43] [color={rgb, 255:red, 0; green, 0; blue, 0 }  ][line width=0.75]    (10.93,-3.29) .. controls (6.95,-1.4) and (3.31,-0.3) .. (0,0) .. controls (3.31,0.3) and (6.95,1.4) .. (10.93,3.29)   ;
	
	\draw (476.48,146) node [anchor=north west][inner sep=0.75pt]  [font=\footnotesize]  {$\epsilon $};
	\draw (476.48,158) node [anchor=north west][inner sep=0.75pt]  [font=\footnotesize]  {$\epsilon $};
	\draw (477.13,123.4) node [anchor=north west][inner sep=0.75pt]  [font=\footnotesize]  {$1-\epsilon $};
	\draw (477.13,175) node [anchor=north west][inner sep=0.75pt]  [font=\footnotesize]  {$1-\epsilon $};
	\draw (443.86,131.32) node [anchor=north west][inner sep=0.75pt]  [font=\footnotesize]  {${\displaystyle 0}$};
	\draw (443.86,166.96) node [anchor=north west][inner sep=0.75pt]  [font=\footnotesize]  {$1$};
	\draw (525.01,131.32) node [anchor=north west][inner sep=0.75pt]  [font=\footnotesize]  {$1$};
	\draw (525.01,149.88) node [anchor=north west][inner sep=0.75pt]  [font=\footnotesize]  {$0$};
	\draw (525.01,167.7) node [anchor=north west][inner sep=0.75pt]  [font=\footnotesize]  {$\ -1$};

\end{tikzpicture}
}
\label{fig:bec}
}
\qquad
\subfloat[]{
\resizebox{0.2\textwidth}{!}{

\tikzset{every picture/.style={line width=0.75pt}} 

\begin{tikzpicture}[x=0.75pt,y=0.75pt,yscale=-1,xscale=1]
	
	\draw    (449.24,119.67) -- (516.85,119.67) ;
	\draw [shift={(518.85,119.67)}, rotate = 180] [color={rgb, 255:red, 0; green, 0; blue, 0 }  ][line width=0.75]    (10.93,-3.29) .. controls (6.95,-1.4) and (3.31,-0.3) .. (0,0) .. controls (3.31,0.3) and (6.95,1.4) .. (10.93,3.29)   ;
	\draw    (449.24,119.67) -- (517.09,155.85) ;
	\draw [shift={(518.85,156.79)}, rotate = 208.07] [color={rgb, 255:red, 0; green, 0; blue, 0 }  ][line width=0.75]    (10.93,-3.29) .. controls (6.95,-1.4) and (3.31,-0.3) .. (0,0) .. controls (3.31,0.3) and (6.95,1.4) .. (10.93,3.29)   ;
	\draw    (449.24,156.79) -- (516.85,156.79) ;
	\draw [shift={(518.85,156.79)}, rotate = 180] [color={rgb, 255:red, 0; green, 0; blue, 0 }  ][line width=0.75]    (10.93,-3.29) .. controls (6.95,-1.4) and (3.31,-0.3) .. (0,0) .. controls (3.31,0.3) and (6.95,1.4) .. (10.93,3.29)   ;
	\draw    (449.24,156.79) -- (517.09,120.61) ;
	\draw [shift={(518.85,119.67)}, rotate = 151.93] [color={rgb, 255:red, 0; green, 0; blue, 0 }  ][line width=0.75]    (10.93,-3.29) .. controls (6.95,-1.4) and (3.31,-0.3) .. (0,0) .. controls (3.31,0.3) and (6.95,1.4) .. (10.93,3.29)   ;
	
	\draw (485.05,124) node [anchor=north west][inner sep=0.75pt]  [font=\footnotesize]  {$p$};
	\draw (483.48,144) node [anchor=north west][inner sep=0.75pt]  [font=\footnotesize]  {$p$};
	\draw (471.13,104.4) node [anchor=north west][inner sep=0.75pt]  [font=\footnotesize]  {$1-p$};
	\draw (470.13,159.4) node [anchor=north west][inner sep=0.75pt]  [font=\footnotesize]  {$1-p$};
	\draw (438.86,114.32) node [anchor=north west][inner sep=0.75pt]  [font=\footnotesize]  {${\displaystyle 0}$};
	\draw (438.86,149.96) node [anchor=north west][inner sep=0.75pt]  [font=\footnotesize]  {$1$};
	\draw (520.01,114.32) node [anchor=north west][inner sep=0.75pt]  [font=\footnotesize]  {$1$};
	\draw (520.01,150.7) node [anchor=north west][inner sep=0.75pt]  [font=\footnotesize]  {$\ -1$};

\end{tikzpicture}
}
\label{fig:bsc}}
\caption{(a) The binary erasure channel (BEC$(\epsilon)$) with erasure probability $\epsilon$ and output alphabet $\mathscr{Y} = \{-1,0,1\}$, with the output symbol $0$ denoting an erasure. (b) The binary symmetric channel (BSC$(p)$) with crossover probability $p$ and output alphabet $\mathscr{Y} = \{-1,1\}$.}
\end{figure}

\begin{definition}
	\label{def:ach}
	An $(n,2^{nR},(d,\infty))$ code for an input-constrained channel {without feedback} is defined by the encoding function:
	\begin{equation}
		\label{eq:encoder}
		f: \{1,\ldots, 2^{nR}\}\rightarrow \mathscr{X}^n, \quad i\in [n],
	\end{equation}
	such that $(x_{i+1},\ldots,x_{\min\{i+d,n\}}) = (0,\ldots,0)$, if $x_i = 1$. 
	
	Given an output sequence $y^n$, the bit-MAP decoder $\Psi: \mathscr{Y}^n\rightarrow \mathscr{X}^n$ outputs $\hat{\mathbf{x}}:=(\hat{x}_1,\ldots,\hat{x}_n)$, where, for each $i\in [n]$, the estimate
	\begin{equation*}
		\hat{x}_i:=\text{argmax}_{x\in \{0,1\}}  P(X_i=x|y^n).
	\end{equation*}
Likewise, the block-MAP decoder $\Phi: \mathscr{Y}^n\rightarrow \mathscr{X}^n$ outputs as estimate
	\begin{equation*}
		\hat{x}^n:=\text{argmax}_{x^n\in \{0,1\}^n}  P(X^n=x^n|y^n).
	\end{equation*}
	The error under bit-MAP decoding is defined as $$P_b^{(n)}:=1-\frac{1}{n} \sum_{i=1}^{n}\mathbb{E}[\max\{P(X_i=0|Y^n),P(X_i=1|Y^n)\}],$$
	 and the error under block-MAP decoding is defined as $$P_B^{(n)}:=P(\Phi(Y^n)\neq X^n).$$
	A rate $R$ is said to be $(d,\infty)$-achievable under bit-MAP decoding, if there exists a sequence of $(n,2^{nR_n},(d,\infty))$ codes, $\{\mathcal{C}^{(n)}(R)\}_{n\geq 1}$, such that $\lim_{n\rightarrow \infty} P_b^{(n)} = 0$ and $\lim_{n\rightarrow \infty} R_n = R$.  We then say that the sequence of codes $\{\mathcal{C}^{(n)}(R)\}_{n\geq 1}$ \textit{achieves} a rate $R$ over the $(d,\infty)$-RLL input-constrained channel.
	The capacity, $C_{(d,\infty)}$, is defined to be the supremum over the respective $(d,\infty)$-achievable rates, and is a function of the parameters of the noise process. 
	Finally, a family of sequences of codes $\{\{\hat{\mathcal{C}}^{(n)}_{\mathbf{p}}\}_{n\geq 1}\}$, indexed by the noise parameters $\mathbf{p}$, is said to be \emph{capacity-achieving} (or $(d,\infty)$-capacity-achieving), under bit-MAP decoding, if for all $\mathbf{p}$, $\{\hat{\mathcal{C}}^{(n)}_\mathbf{p}\}_{n\geq 1}$ achieves any rate $R\in (0,C_{(d,\infty)}(\mathbf{p}))$ over the $(d,\infty)$-RLL input-constrained channel. Similar definitions hold under block-MAP decoding, as well. Note that the definitions also hold when $d=0$, which represents the unconstrained channel.
\end{definition}

%
%
%
\section{Main Results}
\label{sec:main}
Before we state our upper bound on the rates of linear RLL subcodes of RM codes, we recall the result of Reeves and Pfister in \cite{Reeves}, which provides context to our using RM codes over input-constrained BMS channels. 
For a given $R\in (0,1)$, consider any sequence of RM codes $\{\mathcal{C}_m(R) = \text{RM}(m,r_m)\}_{m\geq 1}$, under the lexicographic ordering of coordinates, with $R_m$ being the rate of $\mathcal{C}_m(R)$, such that $R_m\to R$ as $m\to \infty$. 
The following theorem then holds true:
%

\begin{theorem}[Theorem 1 of \cite{Reeves}]
	\label{thm:Reeves}
	Consider an unconstrained BMS channel with capacity $C\in (0,1)$. Then, any rate $R\in [0,C)$ is achieved by the sequence of codes $\{\mathcal{C}_m(R)\}_{m\geq 1}$, under bit-MAP decoding.
\end{theorem}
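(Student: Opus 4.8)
The plan is to follow the architecture of Reeves and Pfister, since this is precisely their theorem, and to flag where the genuine difficulty lies. The proof does not attempt a direct combinatorial analysis of a decoder; instead it controls the bit-MAP error through conditional entropy. Fix the BMS channel and embed it in a smooth one-parameter family of degraded BMS channels $\{W_t\}_{t\in[0,1]}$ running from a noiseless channel to a useless one, with the given channel occurring at a parameter value determined by its capacity $C$. For $X^n$ drawn uniformly from $\mathcal{C}_m(R)$ and transmitted over $W_t$ with output $Y^n$, I would study the per-coordinate extrinsic entropies
\[
 h_i(t) := H\!\left(X_i \mid Y^n,\, X_{[2^m]\setminus\{i\}}\right),
\]
because the bit-MAP error on coordinate $i$ is sandwiched by functions of $h_i(t)$. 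The goal reduces to showing that, for every $R<C$, the value of $h_i$ at the parameter corresponding to the true channel tends to $0$ as $m\to\infty$; this forces $P_b^{(n)}\to 0$ and hence establishes that the rate $R$ is achieved.

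The first ingredient is the symmetry of RM codes. The affine group $\mathrm{AGL}(m,2)$, a doubly transitive group, acts on the $2^m$ evaluation points and is contained in the automorphism group of $\mathrm{RM}(m,r)$. Transitivity makes the law of $(X_i,Y^n,X_{\sim i})$ invariant under relabelling of coordinates, so $h_i(t)=h(t)$ is independent of $i$, and the average extrinsic entropy equals each individual one. The second ingredient is an Area Theorem: integrating the GEXIT functional built from $h(t)$ over the channel family $t\in[0,1]$ recovers, on one side, the normalized rate of the code, and on the other, a quantity controlled by the capacity of the channel. This pins the total area under $h$ to a value determined by $R$ and $C$, reducing the whole problem to a statement about the \emph{shape} of the single scalar profile $t\mapsto h(t)$.

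The heart of the proof, and the step I expect to be overwhelmingly the hardest, is to show that $h(t)$ develops a \emph{sharp threshold}: as $m\to\infty$ it converges to a $\{0,1\}$-valued step function whose transition window has width $o(1)$. Combined with the Area Theorem, a sharp threshold forces the jump to sit exactly at the parameter corresponding to capacity; consequently, for any $R<C$ the true channel lands strictly on the good side of the jump, where $h(t)\to 0$, which is exactly what is needed. For the erasure channel this sharp-threshold phenomenon can be extracted from the theory of monotone Boolean functions: the EXIT function is monotone in $t$, its derivative equals an average influence, and transitive symmetry forces that influence to be large in the transition region (via the Margulis--Russo identity together with sharp-threshold theorems for symmetric monotone functions). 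For a general BMS channel no such Boolean reduction is available, and this is precisely where the new machinery is required. One must instead lower bound $\frac{d}{dt}h(t)$ by a higher-order correlation quantity --- morally the overlap between two independent codeword replicas conditioned on the output --- and then use the transitive symmetry to upgrade control of \emph{pairwise} coordinate correlations into a global concentration statement, yielding a differential inequality of logistic type whose solutions are necessarily sharp. Establishing this bound on the overlap and converting it into a quantitative lower bound on the derivative of the entropy profile is the technical core, and the main obstacle, of the cited result.

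Finally, once $h(t)\to 0$ at the true channel for every $R<C$, the per-letter bit-MAP error vanishes, so the sequence $\{\mathcal{C}_m(R)\}_{m\ge 1}$, whose rates $R_m\to R$, achieves $R$ under bit-MAP decoding, as claimed.
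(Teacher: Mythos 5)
First, a point of order: the paper does not prove this statement at all --- it is imported verbatim as Theorem~1 of \cite{Reeves}, so the only fair comparison is against the Reeves--Pfister proof itself. Your outline does faithfully reproduce the architecture of that proof (a degradation-ordered family of BMS channels, reduction of bit-MAP error to extrinsic entropies, coordinate-transitivity of the automorphism group collapsing the per-bit quantities to a single profile, an area theorem pinning the integral, and a sharp-threshold argument locating the jump at capacity, with the erasure case handled by Margulis--Russo-type influence bounds). But as a proof attempt it has a genuine gap, which you yourself flag: the entire technical core --- lower-bounding the derivative of the entropy profile by a two-replica overlap quantity and upgrading pairwise correlation control, via transitivity, into the concentration statement that forces a sharp threshold for general (non-erasure) BMS channels --- is asserted as a programme rather than carried out. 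Nothing in the sketch establishes the differential inequality, nor why its solutions must transition in an $o(1)$ window, so the argument as written proves the theorem only for the BEC-like cases where existing sharp-threshold theory applies, which was already known before \cite{Reeves}.

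There is also a concrete technical error in your setup. You define the extrinsic entropy as $h_i(t) = H\left(X_i \mid Y^n, X_{[2^m]\setminus\{i\}}\right)$, conditioning on \emph{all other code bits} as well as the outputs. For RM codes this quantity is identically zero: the dual of $\mathrm{RM}(m,r)$ is $\mathrm{RM}(m,m-r-1)$, and every coordinate lies in the support of some dual codeword, so $X_i$ is a deterministic parity of $X_{[2^m]\setminus\{i\}}$. With that definition the sandwich between $h_i(t)$ and the bit-MAP error fails at the first step. The correct GEXIT-type quantity conditions only on the \emph{extrinsic channel outputs}, $H\left(X_i \mid Y_{[2^m]\setminus\{i\}}\right)$ (or on $Y^n$ for the error probability itself); this is what the area theorem integrates and what double transitivity renders coordinate-independent. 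The fix is routine, but as written the central object of your proof degenerates.
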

Hence, the families of codes described above 
are $(0,\infty)$-capacity-achieving, under bit-MAP decoding. 

We now discuss a theorem that provides upper bounds on the largest rate achievable, using linear subcodes of RM codes, over a $(d,\infty)$-RLL input-constrained BMS channel. Fix any sequence of codes $\{{\mathcal{C}}_m(R) = \text{RM}(m, r_m)\}_{m\geq 1}$, which achieves a rate $R$ over the unconstrained BMS channel. Let $\overline{\mathcal{C}}_{d}^{(m)}$ denote the largest \emph{linear} subcode of ${\mathcal{C}}_m(R)$, all of whose codewords respect the $(d,\infty)$-RLL constraint. We then define
\begin{equation}
	\label{eq:Rub}
\mathsf{R}^{(d,\infty)}_{{\mathcal{C}},\text{Lin}}(R):=\limsup_{m\to \infty}\frac{\log_2\left \lvert \overline{\mathcal{C}}_{d}^{(m)}\right\rvert}{2^m},
\end{equation}
to be the largest rate achieved by linear $(d,\infty)$-RLL subcodes of $\{{C}_m(R)\}$, assuming that the ordering of the coordinates of the code is according to the lexicographic ordering. Then,

\begin{theorem}
	\label{thm:rmlinub}
	For any sequence of codes $\{{\mathcal{C}}_m(R) = \text{RM}(m,r_m)\}_{m\geq 1}$, with rate$(\mathcal{C}_m(R))\xrightarrow{m\to \infty} R$, it holds that
	\[
	\mathsf{R}^{(d,\infty)}_{{\mathcal{C}},\text{Lin}}(R) \leq \frac{R}{d+1}.
	\]
\end{theorem}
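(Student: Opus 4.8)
The plan is to pass from the subcode $\overline{\mathcal{C}}_{d}^{(m)}$ to its support structure and then invoke a purely linear-algebraic fact about shortening Reed--Muller codes. Write $V=\overline{\mathcal{C}}_{d}^{(m)}$ and let $L\subseteq\{0,1,\dots,2^m-1\}$ be the set of \emph{live} coordinates of $V$, i.e.\ those positions at which some codeword of $V$ is nonzero. Every codeword of $V$ is supported on $L$ and lies in $\text{RM}(m,r_m)$, so $V$ is contained in the shortened subcode $C_L:=\{\mathbf{c}\in\text{RM}(m,r_m):\text{supp}(\mathbf{c})\subseteq L\}$. The argument then rests on two claims: that $L$ is itself a $(d,\infty)$-RLL set, so $|L|$ is small; and that shortening a Reed--Muller code to any coordinate set cannot increase its rate, so $\dim C_L$ is small relative to $|L|$.

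First I would show that $L$ respects the $(d,\infty)$-RLL constraint. Suppose two positions $i<j$ with $j-i\le d$ were both live. Projecting $V$ onto the coordinates $\{i,j\}$ yields a linear subspace of $\mathbb{F}_2^2$ containing a vector with a $1$ in the first coordinate and a vector with a $1$ in the second; any such subspace of $\mathbb{F}_2^2$ necessarily contains $(1,1)$. Hence some codeword of $V$ carries $1$'s at both $i$ and $j$, and since $j-i\le d$ this codeword has two $1$'s separated by fewer than $d$ zeros, contradicting the $(d,\infty)$-RLL property of $V$. Thus no two live positions lie within distance $d$, the indicator of $L$ is a $(d,\infty)$-RLL word, and $|L|\le\lceil 2^m/(d+1)\rceil$.

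The heart of the proof is the claim that for \emph{every} coordinate set $L$,
\[
\dim C_L\ \le\ \frac{|L|}{2^m}\,\binom{m}{\le r_m}.
\]
I would establish this by a supermodularity argument exploiting the affine symmetry of $\text{RM}(m,r_m)$. The function $\phi(L):=\dim C_L$ is supermodular: from $C_{L}\cap C_{L'}=C_{L\cap L'}$ and $C_L+C_{L'}\subseteq C_{L\cup L'}$ one gets $\phi(L)+\phi(L')\le\phi(L\cap L')+\phi(L\cup L')$. Consequently $D(L):=|L|-\phi(L)$ is a nonnegative, submodular, $\text{AGL}(m,2)$-invariant set function with $D(\emptyset)=0$, and the displayed inequality is equivalent to $D(L)/|L|\ge D(\Omega)/2^m$ for all $L$, where $\Omega$ is the full coordinate set; that is, the density $D(L)/|L|$ is minimized by the whole ground set. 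Since the minimizers of this density are closed under union (a standard consequence of submodularity) and $D$ is invariant under the translation subgroup of $\text{AGL}(m,2)$, which acts transitively on $\Omega$, the union of all translates of a minimizer is a translation-invariant minimizer and hence equals $\Omega$; this forces $\Omega$ to attain the minimum density. I expect this lemma to be the main obstacle: one must upgrade the transitive symmetry of $\text{RM}(m,r_m)$ into a \emph{worst-case} (not merely average-case) bound on the shortened dimension, and indeed for codes lacking such symmetry shortening can strictly increase the rate.

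Finally I would assemble the pieces. Using $V\subseteq C_L$, the lemma, and the RLL bound on $|L|$,
\[
\dim\overline{\mathcal{C}}_{d}^{(m)}\ \le\ \dim C_L\ \le\ \frac{|L|}{2^m}\binom{m}{\le r_m}\ \le\ \frac{\lceil 2^m/(d+1)\rceil}{2^m}\,\binom{m}{\le r_m}.
\]
Dividing by $2^m$ and letting $m\to\infty$, the factor $\lceil 2^m/(d+1)\rceil/2^m\to 1/(d+1)$ while $\binom{m}{\le r_m}/2^m\to R$, so $\mathsf{R}^{(d,\infty)}_{\mathcal{C},\text{Lin}}(R)=\limsup_m \dim\overline{\mathcal{C}}_{d}^{(m)}/2^m\le R/(d+1)$, as required.
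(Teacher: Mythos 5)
Your proof is correct, and it takes a genuinely different route from the paper's. The paper works with the explicit information set $\mathcal{I}_{m,r_m}=\{\mathbf{b}:\mathrm{wt}(\mathbf{b})\le r_m\}$ (Lemma \ref{lem:infoset}), counts the runs of consecutive coordinates of $\mathcal{I}_{m,r_m}$ under the lexicographic order --- exactly $\binom{m-1}{r_m}$ of them (Lemma \ref{lem:runs}) --- and applies Proposition \ref{prop:linub}, by which each disjoint $(d+1)$-window inside an information set costs a linear RLL subcode $d$ dimensions; the bound follows since $\binom{m-1}{r_m}=o(2^m)$. You replace all of this with two ingredients, both of which check out. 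Your observation that the live set $L$ is itself a $(d,\infty)$-RLL word uses the same linearity trick that drives Proposition \ref{prop:linub} (the projection of $V$ onto two live coordinates at distance at most $d$ is a subspace of $\mathbb{F}_2^2$ that must contain $(1,1)$), giving $|L|\le\left\lceil 2^m/(d+1)\right\rceil$. Your shortening lemma $\dim C_L\le \frac{|L|}{2^m}\binom{m}{\le r_m}$ is true and your sketch of it is sound: $\dim C_L$ is supermodular since $C_L\cap C_{L'}=C_{L\cap L'}$ and $C_L+C_{L'}\subseteq C_{L\cup L'}$, so $D(L)=|L|-\dim C_L$ is submodular with $D(\emptyset)=0$; setting $\theta=\min_{L\neq\emptyset}D(L)/|L|$, the nonempty zero sets of the nonnegative submodular function $D-\theta|\cdot|$ are closed under union, and since the translations $\mathbf{z}\mapsto\mathbf{z}+\mathbf{t}$ are automorphisms of $\text{RM}(m,r_m)$ acting transitively on $\mathbb{F}_2^m$, the union of all translates of one density minimizer is the full coordinate set, which therefore attains $\theta$; this is exactly $D(L)/|L|\ge D(\Omega)/2^m$, i.e., your lemma, and you are right that transitivity is essential (for codes without it, shortening can boost rate, as a direct sum of a rate-$1$ and a rate-$0$ code shows). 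As for what each approach buys: your argument is ordering-agnostic --- the coordinate ordering enters only through $|L|\le\left\lceil 2^m/(d+1)\right\rceil$, which holds verbatim under any permutation of coordinates --- so it proves not only Theorem \ref{thm:rmlinub} but also Theorems \ref{thm:grayinf} and \ref{thm:genperminf} in a stronger, uniform form: the exact bound $R/(d+1)$ for \emph{every} coordinate ordering, with no exceptional set of permutations and no $\delta_m$ slack; moreover your shortening lemma extends the result to any code family with a transitive automorphism group. The paper's route, by contrast, is entirely elementary (explicit information sets and run counting, no submodular machinery), and its intermediate objects are reused directly in the Gray-ordering and random-permutation analyses of Section \ref{sec:perm}.
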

Hence, from Theorem \ref{thm:Reeves}, the largest rate achievable over a $(d,\infty)$-RLL input-constrained BMS channel, under bit-MAP decoding, using linear $(d,\infty)$-RLL subcodes of RM codes, is bounded above by $\frac{C}{d+1}$, where $C$ is the capacity of the unconstrained BMS channel. Theorem \ref{thm:rmlinub} is proved in Section \ref{sec:rmublin}. Now, consider the sequence of RM codes $\{\hat{\mathcal{C}}_m(R) = \text{RM}(m,v_m)\}_{m\geq 1}$, with
\begin{equation}
	\label{eq:rmval}
	v_m = \max \left\{\left \lfloor \frac{m}{2}+\frac{\sqrt{m}}{2}Q^{-1}(1-R)\right \rfloor,0\right\},
\end{equation}
where $Q(\cdot)$ is the complementary cumulative distribution function (c.c.d.f.) of the standard normal distribution. 

Now, for a fixed $d\geq 1$, let $z := \left \lceil \log_2(d+1)\right \rceil$. Consider the subcode $\mathcal{C}_m^{(d,\infty)}(R)$, of the code $\mathcal{C}_m(R)$, defined as:
\begin{align}
	\label{eq:rmlb1}
	\mathcal{C}_m^{(d,\infty)}(R):=\Bigg\{\text{Eval}(f): f = &\bigg(\prod_{i=m-z+1}^{m} x_i \bigg)\cdot g(x_{1},\ldots, x_{m-z}),\notag\\ &\text{ where } \text{deg}(g)\leq r_m-z\Bigg\}.
\end{align}
 Note that $\mathcal{C}_m^{(d,\infty)}(R)$ is a \emph{linear} subcode of $\mathcal{C}_m(R)$. The following theorem from \cite{arnk22arxiv} then holds:
\begin{theorem}[Theorem III.2 in \cite{arnk22arxiv}]
	\label{thm:rm}
	For any $R \in (0,C)$, the sequence of linear codes $\{{\mathcal{C}}_{m}^{(d,\infty)}(R)\}_{m\geq 1}$, where ${\mathcal{C}}_{m}^{(d,\infty)}(R) \subset {\mathcal{C}}_m(R)$, achieves a rate of $\frac{R}{2^{\left \lceil \log_2(d+1)\right \rceil}}$, over a $(d,\infty)$-RLL input-constrained BMS channel, under bit-MAP decoding.
\end{theorem}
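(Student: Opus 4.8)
The plan is to verify three properties of the sequence $\{\mathcal{C}_m^{(d,\infty)}(R)\}$, writing $z := \lceil \log_2(d+1)\rceil$: that every codeword respects the $(d,\infty)$-RLL constraint, that its rate tends to $R/2^{z}$, and that its bit-MAP error probability vanishes whenever $R<C$.

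First I would verify the constraint directly from the coordinate ordering. Since the evaluation points are ordered lexicographically with $x_1$ most significant, the suffix $(x_{m-z+1},\ldots,x_m)$ cycles fastest: the $2^m$ coordinates split into $2^{m-z}$ consecutive blocks of length $2^z$, and within each block the suffix runs through all of $\mathbb{F}_2^z$. The factor $\prod_{i=m-z+1}^{m} x_i$ in the definition of $f$ vanishes unless this suffix equals $(1,\ldots,1)$, so every codeword of $\mathcal{C}_m^{(d,\infty)}(R)$ is supported only on the last coordinate of each block. Consequently any two ones in a codeword are separated by at least $2^z-1$ zeros, and since $2^z\ge d+1$ we have $2^z-1\ge d$; thus the constraint holds.

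Next I would identify the code with a ``spread-out'' copy of $\text{RM}(m-z,r_m-z)$. Restricting a codeword $\text{Eval}(f)$ to the support coordinates (suffix all ones) simply evaluates $g(x_1,\ldots,x_{m-z})$ over all of $\mathbb{F}_2^{m-z}$, so restriction is a linear bijection from $\mathcal{C}_m^{(d,\infty)}(R)$ onto $\text{RM}(m-z,r_m-z)$. Hence $\dim \mathcal{C}_m^{(d,\infty)}(R)=\binom{m-z}{\le r_m-z}$, and its rate is $2^{-m}\binom{m-z}{\le r_m-z}=2^{-z}\cdot 2^{-(m-z)}\binom{m-z}{\le r_m-z}$. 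To evaluate the limit I would use the de Moivre--Laplace normal approximation, which identifies $2^{-m}\binom{m}{\le r_m}$ with the probability that a $\mathrm{Bin}(m,1/2)$ variable is at most $r_m$: the hypothesis $2^{-m}\binom{m}{\le r_m}\to R$ forces $r_m=\frac{m}{2}+\frac{\sqrt{m}}{2}Q^{-1}(1-R)+o(\sqrt{m})$, consistent with \eqref{eq:rmval}. Substituting this into the corresponding approximation for $2^{-(m-z)}\binom{m-z}{\le r_m-z}$ and letting $m\to\infty$ (the shift by $z$ in both the summation limit and the number of variables is lower order) shows $2^{-(m-z)}\binom{m-z}{\le r_m-z}\to R$. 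Hence the rate tends to $R/2^{z}=R/2^{\lceil \log_2(d+1)\rceil}$, and, as a byproduct, the inner code $\text{RM}(m-z,r_m-z)$ has rate approaching $R$.

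Finally I would establish achievability under bit-MAP decoding by reducing to Theorem \ref{thm:Reeves}. Because the channel is memoryless and the coordinates outside the support are identically zero across all codewords, their channel outputs are independent of the message and do not affect the posterior on the information-bearing coordinates; moreover at each such off-support coordinate the bit-MAP posterior places mass $1$ on $0$, contributing no error. Thus the posterior on the support coordinates coincides with the bit-MAP posterior for $\text{RM}(m-z,r_m-z)$ transmitted over $2^{m-z}$ independent uses of the same BMS channel, and a short computation gives $P_b^{(2^m)}=2^{-z}\,P_b^{\,\text{RM}(m-z,r_m-z)}$. Since the inner code has rate tending to $R<C$, Theorem \ref{thm:Reeves} yields $P_b^{\,\text{RM}(m-z,r_m-z)}\to 0$, whence $P_b^{(2^m)}\to 0$; together with the rate computation this shows the sequence achieves rate $R/2^{\lceil\log_2(d+1)\rceil}$. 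I expect the last step to be the main obstacle: one must argue carefully that the bit-MAP posteriors factor so that decoding the spread-out code is genuinely equivalent (up to the $2^{-z}$ scaling) to decoding the inner RM code, which is precisely what lets the capacity-achieving guarantee of Theorem \ref{thm:Reeves} transfer.
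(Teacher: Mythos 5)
Your proposal is correct and is essentially the intended argument: this theorem is imported from \cite{arnk22arxiv} (the present paper does not reproduce its proof), and your three steps --- support confined to the last coordinate of each block of $2^z$ consecutive positions under the lexicographic order (whence gaps of $2^z-1\geq d$ zeros), the linear bijection onto $\text{RM}(m-z,r_m-z)$ giving dimension $\binom{m-z}{\le r_m-z}$ and limiting rate $R/2^z$ via the Gaussian approximation to the binomial CDF, and the reduction of bit-MAP decoding to the inner code via Theorem \ref{thm:Reeves}, with $P_b^{(2^m)}=2^{-z}P_b^{\text{inner}}$ since off-support posteriors are deterministic --- match exactly the structure of the construction in \eqref{eq:rmlb1} that the paper relies on (e.g., its observation in Section \ref{sec:cosets} that codewords of ${\mathcal{C}}_{n}^{(d,\infty)}(R)$ begin with $d$ zeros).
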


 Thus, Theorem \ref{thm:rmlinub} shows that the sequence of linear subcodes $\{{\mathcal{C}}_{m}^{(d,\infty)}(R)\}_{m\geq 1}$, in equation \eqref{eq:rmlb1}, is rate-optimal whenever $d+1$ is a power of $2$, in that it achieves the rate upper bound of $R/(d+1)$. We remark here that the problem of identifying linear codes that are subsets of the set of $(d,\infty)$-RLL sequences of a fixed length, has been studied \cite{lechner}. The results therein show that the largest linear code within $S_{(d,\infty)}^{(m)}$ has rate no larger than $\frac{1}{d+1}$, as $m\to \infty$. However, such a result offers no insight into rates achievable over BMS channels.
We then consider situations where the coordinates of the RM codes follow orderings different from the standard lexicographic ordering. First, we study upper bounds on the rates of linear $(d,\infty)$-RLL subcodes of RM codes, ordered according to a Gray ordering (see Section \ref{sec:perm} for a description of a Gray ordering). For a fixed $R\in (0,C)$, let $\{{\mathcal{C}}_m^\text{G}(R)\}_{m\geq 1}$ be any sequence of RM codes under a Gray ordering, such that rate$(\mathcal{C}_m^\text{G}(R))\xrightarrow{m\to \infty} R$. Further, for every $m$, let $\overline{\mathcal{C}}_{d,\text{G}}^{(m)}$ be the largest \emph{linear} subcode of ${\mathcal{C}}_m^\text{G}(R)$. We also define
\begin{equation}
	\label{eq:Rubgray}
	\mathsf{R}^{(d,\infty)}_{{\mathcal{C}^\text{G}},\text{Lin}}(R):=\limsup_{m\to \infty}\frac{\log_2\left \lvert \overline{\mathcal{C}}_{d,\text{G}}^{(m)}\right\rvert}{2^m}
\end{equation}
to be the largest rate achieved by linear $(d,\infty)$-RLL subcodes of $\{\mathcal{C}_m^\text{G}(R)\}_{m\geq 1}$.
We obtain the following result:

\begin{theorem}
	\label{thm:grayinf}
	For any sequence of RM codes under a Gray ordering, $\{{\mathcal{C}}_m^\text{G}(R)\}_{m\geq 1}$, with rate$(\mathcal{C}_m^\text{G}(R))\xrightarrow{m\to \infty} R$, it holds that 
	\[
	\mathsf{R}^{(d,\infty)}_{{\mathcal{C}^\text{G}},\text{Lin}}(R)\leq \frac{R}{d+1}.
	\]
\end{theorem}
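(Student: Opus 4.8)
The plan is to deduce Theorem~\ref{thm:grayinf} from the already-established lexicographic bound (Theorem~\ref{thm:rmlinub}) by exploiting the affine invariance of Reed--Muller codes, rather than re-running the counting argument from scratch. Recall the standard fact that for any $A\in \text{GL}(m,2)$ and $\mathbf{b}\in \mathbb{F}_2^m$, the substitution $\mathbf{z}\mapsto A\mathbf{z}+\mathbf{b}$ does not increase the degree of a polynomial, so precomposition with such an affine map is a linear bijection of $\text{RM}(m,r)$ onto itself. The key observation I would use is that the (reflected) Gray ordering differs from the lexicographic ordering only through a relabeling of the evaluation points by a \emph{linear} bijection of $\mathbb{F}_2^m$, which therefore lies inside the affine automorphism group of the code.

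Concretely, I would first make the relabeling explicit. Writing the integer index $i\in\{0,\ldots,2^m-1\}$ in binary as $\mathbf{b}(i)=(b_1,\ldots,b_m)$, the point occupying position $i$ in the Gray ordering is $L(\mathbf{b}(i))$, where $L$ is the linear map sending $(b_1,\ldots,b_m)$ to $(b_1,\,b_1\oplus b_2,\,\ldots,\,b_{m-1}\oplus b_m)$; its matrix is lower-bidiagonal with unit diagonal, hence $L\in\text{GL}(m,2)$. Next I would introduce the precomposition operator $T$ defined by $(Tf)(\mathbf{z}):=f(L\mathbf{z})$. A direct check shows that reading the codeword $Tf$ in lexicographic order reproduces exactly the sequence obtained by reading $f$ in Gray order, since the $i$-th lexicographic coordinate of $Tf$ is $f(L\mathbf{b}(i))$, which is the $i$-th Gray coordinate of $f$. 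Consequently $f$ respects the $(d,\infty)$-RLL constraint under the Gray ordering if and only if $Tf$ respects it under the lexicographic ordering.

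Because $T$ is a linear bijection preserving $\text{RM}(m,r)$, it carries every linear Gray-ordered $(d,\infty)$-RLL subcode onto a linear lexicographically-ordered $(d,\infty)$-RLL subcode of the same dimension, and $T^{-1}$ does the reverse. In particular the maximal such subcodes satisfy $\dim \overline{\mathcal{C}}_{d,\text{G}}^{(m)}=\dim \overline{\mathcal{C}}_{d}^{(m)}$ for every $m$, so the two limiting rates defined in \eqref{eq:Rub} and \eqref{eq:Rubgray} coincide. Invoking Theorem~\ref{thm:rmlinub} then yields $\mathsf{R}^{(d,\infty)}_{\mathcal{C}^\text{G},\text{Lin}}(R)=\mathsf{R}^{(d,\infty)}_{\mathcal{C},\text{Lin}}(R)\le \frac{R}{d+1}$, as claimed.

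The main obstacle I anticipate is not the algebra above but pinning down the precise notion of Gray ordering used in Section~\ref{sec:perm}: the reduction works verbatim only when the ordering differs from the lexicographic one by an affine relabeling of $\mathbb{F}_2^m$. If the ordering adopted there is a more general single-bit-change ordering that is \emph{not} affinely equivalent to the lexicographic one, then $T$ need not preserve $\text{RM}(m,r)$ and the clean reduction fails. In that case I would instead adapt the lexicographic argument directly, partitioning the $2^m$ Gray-ordered coordinates into consecutive blocks of length $d+1$ (each of which contains at most one $1$ in any RLL codeword, so that the block-sum map is injective on the subcode) and then bounding the dimension of the image of this block-sum map under the Gray ordering; controlling that image outside the power-of-two regime is where the genuine work would lie.
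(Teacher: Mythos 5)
Your affine-invariance reduction is correct and clean, but it proves a strictly weaker statement than the theorem. The paper's notion of a Gray ordering (Section \ref{sec:perm}) is \emph{any} ordering in which consecutive coordinates, viewed as $m$-tuples, differ in exactly one bit index --- i.e., any Hamiltonian path on the $m$-cube --- and the theorem quantifies over all such orderings; the paper explicitly notes that ``multiple orderings are possible'' satisfying this property. Your map $L(\mathbf{b}) = (b_1, b_1\oplus b_2,\ldots,b_{m-1}\oplus b_m)$ handles exactly the binary reflected Gray code (and any Gray ordering obtained from lexicographic by an affine relabeling), for which your argument correctly shows $\dim \overline{\mathcal{C}}_{d,\text{G}}^{(m)} = \dim \overline{\mathcal{C}}_{d}^{(m)}$. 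But almost no Gray orderings are of this form: the number of Hamiltonian paths on the $m$-cube grows doubly exponentially in $m$, while $\lvert \mathrm{AGL}(m,2)\rvert = 2^{O(m^2)}$, so for a generic single-bit-change ordering the precomposition operator $T$ does not preserve $\text{RM}(m,r_m)$ and the reduction collapses --- a failure mode you yourself flagged. Since the theorem as stated covers these orderings, the main argument has a genuine gap.

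The fallback you sketch does not close it. Partitioning all $2^m$ Gray-ordered coordinates into consecutive $(d+1)$-blocks and using the fact that each block projection of a linear RLL subcode has dimension at most $1$ bounds the subcode dimension by $2^m/(d+1)$, i.e., rate $1/(d+1)$, not $R/(d+1)$; and ``bounding the dimension of the image of the block-sum map'' is circular, since by your own injectivity observation that image has the same dimension as the subcode. (Bounding instead the block-sum image of the \emph{whole} code works in the lexicographic, $d+1 = 2^z$ case, where block sums amount to summing out the last $z$ variables, but fails for general blocks under a general Gray ordering.) The missing ingredient --- and the paper's actual route --- is to keep the information set $\mathcal{I}_{m,r_m} = \{\mathbf{b}: \mathrm{wt}(\mathbf{b})\le r_m\}$ from Lemma \ref{lem:infoset} and count \emph{runs} of information-set coordinates under the Gray ordering: since consecutive Gray coordinates differ in one bit, the weight changes by exactly $\pm 1$ at each step, so a run can end only at a coordinate immediately followed by one of weight exactly $r_m+1$, giving $\lvert \Gamma^{\text{G}}_{m,r_m}\rvert \le \binom{m}{r_m+1} = O(2^m/\sqrt{m})$ (Lemma \ref{lem:runsgray}). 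This yields $t_m^{\text{G}} \ge \frac{K_m}{d+1} - \binom{m}{r_m+1}$ disjoint $(d+1)$-tuples inside the information set, and Proposition \ref{prop:linub} then bounds the subcode dimension by $K_m - d t_m^{\text{G}}$, which gives $\frac{R}{d+1}$ in the limit --- an argument that, unlike the affine reduction, is insensitive to which Hamiltonian path defines the ordering. Note also that this run-counting works only relative to an information set; applied to your all-coordinates partition it would again give only $1/(d+1)$.
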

The proof of Theorem \ref{thm:grayinf} is provided in Section \ref{sec:perm}.

Now, we consider arbitrary orderings of coordinates, defined by the sequence of permutations $(\pi_m)_{m\geq 1}$, with $\pi_m: [0:2^m-1]\to [0:2^m-1]$. As with the Gray ordering, we define the sequence of $\pi$-ordered RM codes $\{\mathcal{C}_m^\pi(R)\}_{m\geq 1}$, with
\begin{align*}\mathcal{C}_{m}^\pi(R):= \big\{(c_{\pi_m(0)},c_{\pi_m(2)}&,\ldots,c_{\pi_m(N_m-1)}):\\ &(c_0,c_1,\ldots,c_{N_m-1})\in {\mathcal{C}}_m(R)\big\}.\end{align*}
We also define $\overline{\mathcal{C}}_{d,\pi}^{(m)}$ be the largest \emph{linear} $(d,\infty)$-RLL subcode of $\mathcal{C}_m^\pi(R)$. The theorem below is then shown to hold:
\begin{theorem}
	\label{thm:genperminf}
	For large $m$ and for all but a vanishing fraction of coordinate permutations, $\pi_m: [0:2^m-1]\to [0:2^m-1]$, the following rate upper bound holds:
	\[
	\frac{\log_2\left \lvert \overline{\mathcal{C}}_{d,\pi}^{(m)}\right\rvert}{2^m}\leq \frac{R}{d+1}+\delta_m,
	\]
	where $\delta_m\xrightarrow{m\to \infty} 0$.
\end{theorem}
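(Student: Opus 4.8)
The plan is to reduce the rate bound to a statement about the dimension of \emph{shortened} Reed--Muller codes on random coordinate sets, and then to control that dimension for all admissible sets simultaneously by a first-moment argument over the random permutation.

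\emph{Structural reduction.} First I would establish the combinatorial fact that also underlies the lexicographic and Gray cases. Let $N = 2^m$, let $\hat{\mathcal{C}}$ be any \emph{linear} $(d,\infty)$-RLL subcode of $\mathcal{C}_m^\pi(R)$, and let $S$ be the set of coordinates on which $\hat{\mathcal{C}}$ is not identically $0$ (its \emph{active} coordinates). I claim any two elements of $S$ differ by at least $d+1$. Indeed, if $i<j$ were both active with $j-i\le d$, then the coordinate functionals $c\mapsto c_i$ and $c\mapsto c_j$ are both nonzero on $\hat{\mathcal{C}}$, and a one-line linear-algebra argument shows some codeword then carries $1$s at both $i$ and $j$ — two $1$s separated by fewer than $d$ zeros, contradicting the constraint. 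Since inactive coordinates carry no information, $\hat{\mathcal{C}}$ embeds into its restriction to $S$, which is a subcode of $\mathcal{C}_m^\pi(R)$ shortened to $S$; writing $T:=\pi_m^{-1}(S)$ and using that shortening dimension is permutation-invariant,
\[
\dim \hat{\mathcal{C}} \le \dim\bigl(\mathcal{C}_m(R)\ \text{shortened to } T\bigr), \qquad |T| = |S| \le \bigl\lceil \tfrac{N}{d+1}\bigr\rceil .
\]
Hence $\dim \overline{\mathcal{C}}_{d,\pi}^{(m)} \le \max_{S}\dim(\mathcal{C}_m(R)\text{ sh. }\pi_m^{-1}(S))$, the maximum over all $(d+1)$-separated sets $S$.

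\emph{Probabilistic setup.} Fix a $(d+1)$-separated set $S$. Since $\pi_m$ is uniform, $T=\pi_m^{-1}(S)$ is a uniformly random $|S|$-subset of the coordinates, \emph{independent of the algebraic structure of} $S$. I would bound the shortened dimension by the logarithm of the number of codewords supported in $T$ and apply the first moment,
\[
\mathbb{E}_{\pi}\bigl[\#\{c\in \mathcal{C}_m(R)\setminus\{0\}:\ \operatorname{supp}(c)\subseteq T\}\bigr] = \sum_{w\ge d_{\min}} A_w\,\frac{\binom{|S|}{w}}{\binom{N}{w}},
\]
where $A_w$ is the weight enumerator of $\mathcal{C}_m(R)$ and $d_{\min}=2^{m-r_m}$. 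The more useful viewpoint is that shortening to $T$ is exactly the residual uncertainty of the BEC erasing the coordinates of $T$: for a uniformly random erasure set of density $|S|/N\le \tfrac{1}{d+1}$, the capacity-achieving behaviour of RM codes on the BEC (the erasure specialization of Theorem \ref{thm:Reeves}) forces this residual dimension to concentrate around $\bigl(\tfrac{|S|}{N}-(1-R)\bigr)^{+}N$, which is at most $\tfrac{R}{d+1}N$ for every $|S|\le \tfrac{N}{d+1}$ and every $R\le 1$. Thus the target bound is the \emph{typical} value, with room to spare except at the extreme density.

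\emph{Union bound and the main obstacle.} I would then control the maximum over all separated $S$ by a union bound organised by size $s=|S|$: the number of $(d+1)$-separated $s$-subsets of $[N]$ is $\binom{N-(s-1)d}{s}$, which for the packing-maximal size $s=\tfrac{N}{d+1}$ is only polynomial in $N$, while the total over all sizes is $2^{C_0 N+o(N)}$, with $C_0$ the noiseless capacity of the constraint. The main obstacle is the large-deviation estimate: for each $s$ I must show that $\Pr_\pi[\dim(\mathcal{C}_m(R)\text{ sh. }T) > \tfrac{R}{d+1}N+\delta_m N]$ decays fast enough to beat $\binom{N-(s-1)d}{s}$. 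Two features make this plausible: (i) the genuinely dangerous sets — those of size close to $\tfrac{N}{d+1}$, where the trivial bound $\dim\le |S|$ is weaker than $\tfrac{R}{d+1}N$ — are only polynomially numerous, so any exponential per-set deviation rate suffices for them; (ii) for the exponentially many smaller sets there is extra slack, since the typical residual dimension $\bigl(\tfrac{s}{N}-(1-R)\bigr)^{+}N$ sits strictly below $\tfrac{R}{d+1}N$. Quantifying (ii) — proving a large-deviation rate for the shortened dimension of RM codes under random erasures that dominates the separated-set entropy uniformly in $s$ — is where the real work lies; I expect to combine RM weight-enumerator bounds (or the sharp BEC threshold/area-theorem concentration) with the $1$-Lipschitz dependence of the shortened dimension on single coordinate swaps, absorbing all lower-order terms into $\delta_m\to 0$.
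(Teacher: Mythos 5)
Your structural reduction is correct, and in fact it is a clean global strengthening of the paper's Proposition \ref{prop:linub}: the observation that the active coordinates of a linear $(d,\infty)$-RLL subcode form a $(d+1)$-separated set (via the sum-of-two-codewords argument) is exactly the mechanism the paper uses locally inside an information set. However, your proof has a genuine gap at its probabilistic core, and you say so yourself: the theorem is reduced to a large-deviation estimate on the dimension of RM codes shortened to a uniformly random set, \emph{uniform over all $(d+1)$-separated active sets $S$}, and this estimate is neither proved nor available off the shelf. The known results invoked by the paper give only $o(1)$ failure probabilities: the block-MAP BEC results (\cite{kud1}, \cite{kud3}) and Corollary 18 of \cite{abbesw} yield ``with probability $1-\delta_m$'' statements for a \emph{single} random coordinate set, with $\delta_m\to 0$ but with no exponential rate. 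Your union bound needs per-set failure probabilities beating $2^{g(\theta)N}$ for set densities $\theta\in\bigl(\tfrac{R}{d+1},\tfrac{1}{d+1}\bigr]$, where $g(\theta)=(1-\theta d)\,h\bigl(\tfrac{\theta}{1-\theta d}\bigr)$ is bounded away from $0$ in the interior of this range. The Lipschitz/swap-martingale concentration you gesture at is quantitatively insufficient: the deviation you need from the ($o(N)$) mean is roughly $\tfrac{R}{d+1}N$, giving a bounded-differences exponent on the order of $\bigl(\tfrac{R}{d+1}\bigr)^2/\theta$, which for small $R$ (e.g., $d=1$, $R=0.1$, $\theta=1/4$) is orders of magnitude below $g(\theta)$. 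A first-moment route via $\sum_w A_w\binom{s}{w}/\binom{N}{w}$ would require quantitative RM weight-enumerator bounds (e.g., of Sberlo--Shpilka type) well beyond the toolkit the paper relies on, and you have not carried out that computation; so as written the argument does not close.

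It is worth seeing how the paper sidesteps this entirely. Rather than conditioning on the subcode's (adversarially chosen, permutation-dependent) active set, the paper establishes a \emph{single} high-probability event: by Corollary 18 of \cite{abbesw}, for a $1-\delta_m$ fraction of permutations the first $K_m(1+\alpha_m)$ coordinates of $\mathcal{C}_m^\pi(R)$ contain an information set $\mathcal{J}_{m,r_m}$. Since $|\mathcal{J}_{m,r_m}|=K_m$ sits inside a window of only $K_m(1+\alpha_m)$ positions, it has at most $K_m\alpha_m$ runs, hence at least $\tfrac{K_m}{d+1}-K_m\alpha_m$ disjoint $(d+1)$-tuples of consecutive coordinates, and Proposition \ref{prop:linub} then bounds the dimension of \emph{every} linear $(d,\infty)$-RLL subcode simultaneously and deterministically --- uniformity over subcodes comes for free, with no union bound over active sets and no need for exponentially small failure probabilities. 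If you want to salvage your route, the missing ingredient is precisely an exponential-rate version of the shortened-dimension estimate for RM codes under random erasures; absent that, you should restructure the argument around a deterministic information-set criterion as the paper does.
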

Section \ref{sec:perm} contains the proof of Theorem \ref{thm:genperminf}.

Next, we turn our attention to the design of non-linear $(d,\infty)$-RLL codes, whose rates improve on those in Theorem \ref{thm:rm}. Our next theorem, stated below informally, uses cosets of RM codes, for this purpose. We denote by $C_0^{(d)}$, the noiseless capacity of the $(d,\infty)$-RLL constraint, and by $C$, the capacity of the unconstrained BMS channel. 

\begin{theorem}[Informal]
	\label{thm:rmcosets}
	For any BMS channel of capacity $C$, there exists a sequence of $(d,\infty)$-RLL constrained codes $\{{\mathcal{C}}_m^{\text{cos}}\}_{m\geq 1}$, using cosets of RM codes, such that
	\[
	\liminf_{m\to \infty} \text{rate}(\mathcal{C}_m^{\text{cos}})\geq \frac{C_0^{(d)}\cdot C^2\cdot 2^{-\left \lceil \log_2(d+1)\right \rceil}}{C^2\cdot 2^{-\left \lceil \log_2(d+1)\right \rceil} + 1-C+2^{-\tau}},
	\]
with the above bound being achievable over any $(d,\infty)$-RLL input-constrained BMS channel. Here, $\tau$ is an arbitrarily large, but fixed, positive integer.
\end{theorem}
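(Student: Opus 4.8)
The plan is to realize the scheme as a \emph{reverse-concatenation} (coset-binning) construction, in which the bulk of the data is carried by a constrained sequence that is itself a legal channel input, while only a small amount of side information is pushed through an expensive, reliably decodable ``pipe.'' Concretely, I would fix a large inner blocklength $N_1 = 2^{m}$ and a sequence of capacity-achieving RM codes $\{\mathcal{C}_m\}$ of rate $\to C$, which exist and decode with vanishing bit error by Theorem \ref{thm:Reeves}. First I would encode the message into a constrained word $\mathbf{s}\in S_{(d,\infty)}^{(N_1)}$ by an enumerative encoder whose rate tends to the noiseless capacity $C_0^{(d)}$; since $\mathbf{s}$ already respects the constraint, it can be transmitted \emph{directly} over the channel. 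The word $\mathbf{s}$ is then binned by its coset with respect to $\mathcal{C}_m$: because any two constrained words in a common coset differ by a full RM codeword, the bit-MAP decoder of Theorem \ref{thm:Reeves}, applied after subtracting the coset representative, recovers $\mathbf{s}$ with vanishing bit error \emph{provided the decoder is told which coset was used}. The coset index is the syndrome of $\mathbf{s}$, carrying roughly $(1-C)N_1$ bits, and only bit-level reliability is needed here, consistent with the $P_b$-based notion of achievability in Definition \ref{def:ach}.

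The second ingredient is a constrained ``side channel'' that conveys this $\approx (1-C)N_1$-bit syndrome so that the coset is pinned down exactly; unlike the main block, this delivery must be reliable at the \emph{block} level, since a single erroneous syndrome bit is catastrophic. Here I would use the linear $(d,\infty)$-RLL subcode of Theorem \ref{thm:rm}, writing $z = \lceil \log_2(d+1)\rceil$, which supplies a reliable constrained bit-pipe of rate $C\cdot 2^{-z}$. To upgrade its bit-level reliability to the block-level reliability required for the syndrome, I would compose it with an outer capacity-$C$ coding layer (protecting the syndrome stream against the residual-error ``super-channel'' of capacity $C$ induced by the inner subcode), which costs an additional factor $C$ and yields an effective block-reliable pipe rate of $C^2\cdot 2^{-z}$. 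A finite, $\tau$-indexed inner block used to make this composition explicit contributes a rate overhead of $2^{-\tau}$, which is why the fixed integer $\tau$ enters the statement. Finally, I would check that the juxtaposition of the main constrained block $\mathbf{s}$ with the constrained side block is again globally $(d,\infty)$-RLL: the subcode codewords are constrained, and the block boundary can be padded with $d$ zeros at vanishing rate cost.

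With $N_2 \approx \bigl((1-C)+2^{-\tau}\bigr)N_1/(C^2\, 2^{-z})$ channel uses spent on the side block and $N_1$ on the main block, the total number of reliably decoded message bits is $\approx C_0^{(d)}N_1$, giving a rate
\[
\frac{C_0^{(d)}\, N_1}{N_1+N_2}\;\longrightarrow\;\frac{C_0^{(d)}\cdot C^2\cdot 2^{-z}}{C^2\cdot 2^{-z}+1-C+2^{-\tau}},
\]
which is the claimed bound. Vanishing error would follow by a union bound combining the vanishing bit error of the main-block bit-MAP decoder (Theorem \ref{thm:Reeves}) with the block-reliable recovery of the syndrome over the side channel.

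The main obstacle I anticipate is precisely this reliable delivery of the coset index over the constrained side channel: the subcode of Theorem \ref{thm:rm} is only guaranteed to be bit-reliable, whereas the coset must be identified without error, and it is the passage from bit- to block-reliability over a constrained channel that forces the second factor of $C$ and the $\tau$-dependent slack into the rate expression. Making this composition both explicit (so that it genuinely uses cosets of RM codes) and rate-efficient, while simultaneously verifying that the concatenation of the main and side blocks never violates the $(d,\infty)$-RLL constraint and that the enumerative encoder for $\mathbf{s}$ truly attains rate $\to C_0^{(d)}$, is where the bulk of the technical effort will lie; the remaining counting needed to match the displayed formula is then routine.
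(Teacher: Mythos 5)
Your final bound and the coset-based spirit match the paper's, but your route is genuinely different, and one step of your justification does not hold up. The paper does not transmit a full-length constrained word plus its syndrome. Instead (Section \ref{sec:cosets}), the message is constrained-encoded into a $K_m$-tuple occupying only the information set $\mathcal{I}_{m,r_m}$ of Lemma \ref{lem:infoset} ($K_m\approx C\,2^m$ coordinates, permuted to the front and made systematic), and Lemma \ref{lem:coset} supplies a coset shift $\mathbf{v}\in\tilde{\mathcal{C}}^\pi_m$ that zeroes the parity tail, so the main block actually transmitted is exactly the $K_m$ constrained systematic bits; the $N_m-K_m\approx(1-C)2^m$ parity bits, which identify the coset, are shipped through the linear constrained subcodes of Theorem \ref{thm:rm} at rate $C\cdot 2^{-z}$, writing $z=\left\lceil\log_2(d+1)\right\rceil$, with \emph{no} outer reliability layer. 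Thus in the paper the factor $C^2$ arises as (main-block fraction $\approx C$) times (pipe rate $C\,2^{-z}$), and the $2^{-\tau}$ arises purely from rounding: the parity bits are split into $L$ parts, each encoded at a power-of-two blocklength, with $L/2^{\tau}$ sandwiching $(1-R)/R$ as in \eqref{eq:L}. Your scheme---full-length constrained word sent raw, syndrome binning, side pipe of rate $C^2 2^{-z}$---reproduces the identical formula only because your larger payload ($C_0^{(d)}N_1$ versus the paper's $C_0^{(d)}\cdot C\,N_m$) is exactly offset by the extra factor of $C$ you charge the pipe.

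That extra factor is where your argument is unsupported: you assert that the residual-error ``super-channel'' left by the inner subcode of Theorem \ref{thm:rm} has capacity $C$, but that subcode is operated below its achievable rate and already delivers vanishing bit-error probability, so the induced super-channel has capacity tending to $1$ (and is in any case not memoryless); nothing pins an outer rate of $C$. You are right that upgrading bit-level reliability of the coset index to block-level reliability is the delicate point, but the paper makes no attempt at it either---Lemma \ref{lem:ratecoset} is purely a rate count, decoding is asserted via Theorems \ref{thm:Reeves} and \ref{thm:rm}, and the theorem is labeled ``Informal'' precisely because the effect of residual coset-bit errors on the main-block bit-MAP decoder is left unanalyzed. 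At that shared level of rigor your construction is acceptable, and in fact, dropping your unjustified outer layer, your own accounting yields the strictly larger rate $\frac{C_0^{(d)}\,C\,2^{-z}}{C\,2^{-z}+1-C}$, so the claimed lower bound holds a fortiori; but then the agreement of your expression with the paper's is a coincidence of bookkeeping rather than a derivation of it.
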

It can be checked that the rates achieved using Theorem \ref{thm:rmcosets} are better than those achieved using Theorem \ref{thm:rm} (and in fact, better than those achieved using any sequence of linear $(d,\infty)$-RLL subcodes of RM codes), for low noise regimes of the BMS channel. For example, when $d=1$, the rates achieved using the codes in Theorem \ref{thm:rmcosets} are better than those achieved using linear subcodes, for certain values of $C\gtrapprox 0.7613$. Figures \ref{fig:first} and \ref{fig:second} show comparisons between the lower bounds (achievable rates) in Theorems \ref{thm:rm} and \ref{thm:rmcosets}, with the coset-averaging bound of \cite{pvk}, for $d=1$ and $d=2$, respectively. While \cite{pvk} provides existence results on rates achieved using cosets of RM codes, with the rates calculated therein being better than those in Theorem \ref{thm:rmcosets} in the low noise regimes of the BMS channel, our construction is more explicit. A discussion on the construction leading to Theorem \ref{thm:rmcosets} is taken up in Section \ref{sec:cosets}.
\begin{figure*}%
	\centering	
		\includegraphics[width=0.8\textwidth]{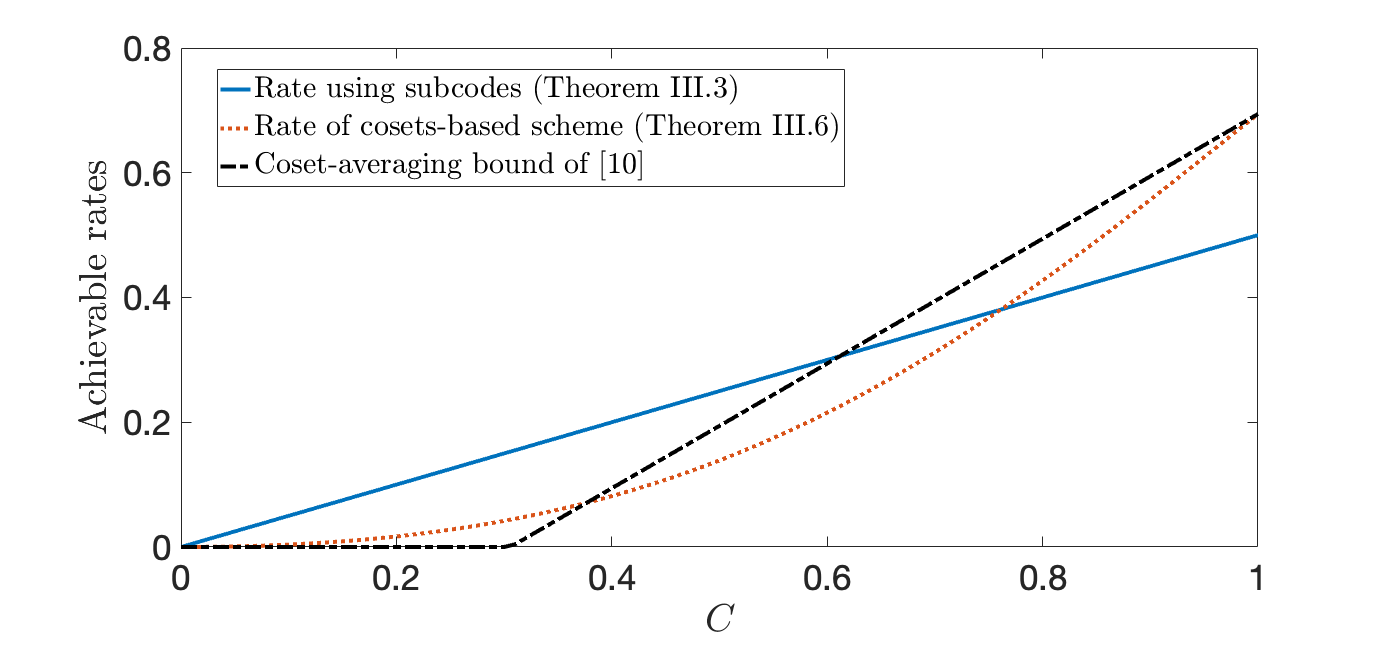}%
	\caption{Plot comparing, for $d=1$, the rate lower bound of $C/2$ achieved using subcodes, from Theorem \ref{thm:rm}, the rate lower bound achieved using Theorem \ref{thm:rmcosets}, with $\tau = 50$, and the coset-averaging lower bound of $\max(0,C_0^{(1)}+C-1)$, of \cite{pvk}. Here, the noiseless capacity, $C_0^{(1)} \approx 0.6942$.}
	\label{fig:first}%
\end{figure*}

%

We end this section with a remark. Note that the all-ones codeword $\mathbf{1}$ belongs to any RM code. Since any codeword $\mathbf{c}$ that respects the $(0,1)$-RLL constraint can be written as $\mathbf{c} = \mathbf{1}+\mathbf{\hat{c}}$, where $\mathbf{\hat{c}}$ respects the $(1,\infty)$-RLL constraint, the lower bound of Theorem \ref{thm:rm} and the upper bound of Theorem \ref{thm:rmlinub} hold for the rates of $(0,1)$-RLL subcodes as well. 


\section{Upper Bounds for Linear Subcodes}
\label{sec:rmublin}
In this section, we derive upper bounds on the rates achieved by linear $(d,\infty)$-RLL subcodes of any sequence of RM codes of rate $R$. We fix a sequence of codes $\{{\mathcal{C}_m(R)} = \text{RM}(m,r_m)\}$ that achieves a rate $R$ over the unconstrained BMS channel. 

We first state and prove a fairly general proposition on the rates of linear $(d,\infty)$-RLL subcodes of linear codes. Recall that for a linear code $\mathcal{C}$ over $\mathbb{F}_2$, of blocklength $N$ and dimension $K$, an information set is a collection of $K$ coordinates in which all possible $K$-tuples over $\mathbb{F}_2$ can appear. Equivalently, if $G$ is any generator matrix for $\mathcal{C}$, an information set is a set of $K$ column indices such that $G$ restricted to those columns is a full-rank matrix. 

\begin{proposition}
	\label{prop:linub}
	Let $\overline{\mathcal{C}}$ be an $[N,K]$ binary linear code. If $\mathcal{I}$ is an information set of $\overline{\mathcal{C}}$ that contains $t$ disjoint $(d+1)$-tuples of consecutive coordinates $(i_1, i_1 + 1,\ldots,i_1+d), (i_2, i_2 + 1,\ldots,i_2+d), ..., (i_t, i_t + 1,\ldots,i_t+d)$, with $i_1\geq 1$, $i_j>i_{j-1}+d$, for all $j\in [2:t]$, and $i_t\leq n-d$, then the dimension of any linear $(d,\infty)$-RLL subcode of $\overline{\mathcal{C}}$ is at most $K-dt$. 
\end{proposition}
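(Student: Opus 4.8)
The plan is to use the defining property of the information set to transport the problem into $\mathbb{F}_2^K$, and then to bound the dimension of the image subspace by projecting onto the $t$ consecutive tuples. First I would recall that, since $\mathcal{I}$ is an information set of $\overline{\mathcal{C}}$, the coordinate projection $\pi_{\mathcal{I}}:\overline{\mathcal{C}}\to \mathbb{F}_2^{\mathcal{I}}\cong \mathbb{F}_2^K$ is a vector-space isomorphism: it is surjective by the very definition of an information set, and injective because source and target both have dimension $K$. Hence for any linear $(d,\infty)$-RLL subcode $\hat{\mathcal{C}}\subseteq \overline{\mathcal{C}}$, setting $V:=\pi_{\mathcal{I}}(\hat{\mathcal{C}})$ we have $\dim \hat{\mathcal{C}}=\dim V$, so it suffices to bound $\dim V$.

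The key local observation would come next. Fix one tuple $T_j:=\{i_j,i_j+1,\ldots,i_j+d\}$, a block of $d+1$ \emph{consecutive} coordinates lying inside $\mathcal{I}$. Because every codeword of $\hat{\mathcal{C}}$ obeys the $(d,\infty)$-RLL constraint, no two of its $1$s can occur within $d+1$ consecutive positions, so the restriction of any codeword to $T_j$ has Hamming weight at most $1$. Thus $\pi_{T_j}(V)$ is a linear subspace of $\mathbb{F}_2^{d+1}$ all of whose vectors have weight $\le 1$; since the sum of any two distinct standard basis vectors has weight $2$, such a subspace contains at most one nonzero vector, giving $\dim \pi_{T_j}(V)\le 1$ for each $j\in[t]$.

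Finally I would assemble the global bound. Let $T:=\bigcup_{j=1}^{t}T_j$; the hypotheses $i_j>i_{j-1}+d$ and $i_t\le N-d$ guarantee the tuples are pairwise disjoint and fit within the coordinate range, so $|T|=t(d+1)$. Applying rank–nullity to the projection $\pi_T$ restricted to $V$ gives $\dim V=\dim \pi_T(V)+\dim\!\big(V\cap \mathbb{F}_2^{\mathcal{I}\setminus T}\big)$, where the second term counts codewords of $V$ vanishing on $T$ and is therefore at most $|\mathcal{I}\setminus T|=K-t(d+1)$. For the first term, since $\mathbb{F}_2^{T}=\bigoplus_{j=1}^{t}\mathbb{F}_2^{T_j}$ and the map $v\mapsto(\pi_{T_1}(v),\ldots,\pi_{T_t}(v))$ is injective on $\pi_T(V)$, subadditivity of projection dimensions yields $\dim \pi_T(V)\le \sum_{j=1}^{t}\dim \pi_{T_j}(V)\le t$. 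Combining the two estimates gives $\dim V\le t+\big(K-t(d+1)\big)=K-dt$, which is exactly the claim.

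The argument is essentially clean once set up this way, so I do not expect a genuine obstacle. The only point demanding care is the local weight-$\le 1$ step: it relies crucially on each tuple consisting of \emph{consecutive} coordinates of the codeword, so that the RLL constraint applies as a sliding window of width $d+1$ — this is precisely what the hypothesis supplies. Beyond that, the work is the bookkeeping of disjointness needed to conclude $|T|=t(d+1)$.
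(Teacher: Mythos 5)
Your proof is correct and follows essentially the same route as the paper: both exploit that the projection onto the information set is injective, and both use linearity of the subcode to show that on each $(d+1)$-tuple of consecutive coordinates only the zero vector and at most one weight-one pattern can occur (your $\dim \pi_{T_j}(V)\le 1$ is exactly the paper's observation that $\mathbf{e}_i^{(d+1)}+\mathbf{e}_j^{(d+1)}$ would violate the constraint). The only difference is bookkeeping: you assemble the bound via rank--nullity, whereas the paper counts codewords multiplicatively ($2^K/2^{dt}$), and your version arguably makes that counting step fully rigorous.
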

\begin{proof}
	Suppose that the information set $\mathcal{I}$ contains exactly $t$ disjoint $(d+1)$-tuples of consecutive coordinates as in the statement of the proposition. By definition, all possible $K$-tuples appear in the coordinates in $\mathcal{I}$. Now, consider any linear $(d,\infty)$-RLL subcode of $\overline{\mathcal{C}}$, and any $(d+1)$-tuple of consecutive coordinates $\{i_j,i_j+1,\ldots,i_j+d\} \in \mathcal{I}$, for $j\in [t]$. Since the $(d,\infty)$-RLL constraint requires that successive $1$s be separated by at least $d$ $0$s (and by linearity of the subcode), the only possible tuples of $d+1$ consecutive symbols, in any codeword in the subcode, are $(0,0,\ldots,0)$ and one of $\mathbf{e}_i^{(d+1)}$, for $i\in [d+1]$. This is because, if $\mathbf{e}_i^{(d+1)}$ and $\mathbf{e}_j^{(d+1)}$ both occur in a collection of $d+1$ consecutive positions, then, by linearity of the subcode $\overline{\mathcal{C}}$, it holds that $\mathbf{e}_i^{(d+1)}+\mathbf{e}_j^{(d+1)}$ (where the addition is over vectors in $\mathbb{F}_2^{d+1}$) must occur in some codeword of the subcode, thereby making the codeword not $(d,\infty)$-RLL. Hence, for every $(d+1)$-tuple of consecutive coordinates, only a $2^{-d}$ fraction of the $2^{d+1}$ possible tuples are allowed. Thus, overall, the number of codewords in the linear $(d,\infty)$-RLL subcode is at most $\frac{2^K}{2^{dt}}$. The result then follows straightforwardly.
	
\end{proof}
In order to obtain an upper bound, as in Theorem \ref{thm:rmlinub}, on the rate of linear $(d,\infty)$-RLL subcodes of the sequence of codes $\{{\mathcal{C}_m(R)}\}_{m\geq 1}$, we shall first identify an information set $\mathcal{I}_{m,r_m}$ of ${\mathcal{C}_m(R)} = \text{RM}(m,r_m)$. We then compute the number of disjoint $(d+1)$-tuples of consecutive coordinates in $\mathcal{I}_{m,r_m}$, and apply Proposition \ref{prop:linub} to get an upper bound on the dimension of the linear constrained subcodes.

We introduce some notation for ease of reading: given a matrix $M_{p\times q}$, we use the notation $M[\mathcal{U},\mathcal{V}]$ to denote the submatrix of $M$ consisting of the rows in the set $\mathcal{U}\subseteq [p]$ and the columns in the set $\mathcal{V}\subseteq [q]$. We also recall the definition of the generator matrix $G_{\text{Lex}}(m,r)$, of RM$(m,r)$, and the indexing of columns of the matrix, from Section \ref{sec:introrm}. We also interchangeably index the coordinates of a codeword of RM$(m,r)$ by integers $i\in [0,2^m-1]$, and by $m$-tuples of binary symbols. Further, the notation $\mathbf{e}_\mathbf{b}^{(2^m)}$ denotes the standard basis vector with a $1$ in the coordinate indexed by $\mathbf{b} = (b_1,\ldots,b_m)$, in the lexicographic order. The superscript `$(2^m)$' will be dropped when clear from the context. 

Now, given the code RM$(m,r)$, consider the binary linear code (a subspace of $\mathbb{F}_2^{2^m}$), $\tilde{\mathcal{C}}({m,r})$, spanned by the codewords in the set 

\begin{equation}
	\label{eq:Bmr}
	\mathcal{B}_{m,r}:=\left\{\text{Eval}\left(\prod_{i\in S}x_i\right): S\subseteq [m]\ \text{with } |S|\geq r+1\right\}.
\end{equation}

It can be checked that the vectors in $\mathcal{B}_{m,r}$ are also linearly independent, and, hence, $\mathcal{B}_{m,r}$ forms a basis for $\tilde{\mathcal{C}}({m,r})$, with dim$\left(\tilde{\mathcal{C}}({m,r})\right) = {m \choose \geq r+1}$. Moreover, the codewords in $\tilde{\mathcal{C}}({m,r})$ are linearly independent from codewords in RM$(m,r)$, by definition.

The following lemma identifies an alternative basis for $\tilde{\mathcal{C}}({m,r})$, which will prove useful in our analysis, later on.

\begin{lemma}
	\label{lem:quotient}
	Consider the code $\tilde{\mathcal{C}}({m,r})=\text{span}\left(\mathcal{B}_{m,r}\right)$, where $\mathcal{B}_{m,r}$ is as in \eqref{eq:Bmr}. It holds that $\tilde{\mathcal{C}}({m,r}) = \text{span}\left(\{\mathbf{e}_{\mathbf{b}}: \text{wt}(\mathbf{b})\geq r+1\}\right)$.
	\end{lemma}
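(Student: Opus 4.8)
The plan is to prove the equality by establishing a single inclusion and then upgrading it to equality via a dimension count, thereby sidestepping the more laborious reverse inclusion (which would otherwise require a Möbius-type inversion over the Boolean lattice of supports). The first step is to note that both spaces have the same dimension $\binom{m}{\geq r+1}$. For the right-hand side this is immediate: the vectors $\{\mathbf{e}_{\mathbf{b}}:\text{wt}(\mathbf{b})\geq r+1\}$ are distinct standard basis vectors, hence linearly independent, so their span has dimension equal to the number of $m$-tuples $\mathbf{b}$ of weight at least $r+1$, namely $\binom{m}{\geq r+1}$. For the left-hand side, the excerpt already records that $\mathcal{B}_{m,r}$ is a basis of $\tilde{\mathcal{C}}(m,r)$, so $\dim\left(\tilde{\mathcal{C}}(m,r)\right)=\binom{m}{\geq r+1}$ as well.

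The inclusion I would prove is $\tilde{\mathcal{C}}(m,r)\subseteq \text{span}\left(\{\mathbf{e}_{\mathbf{b}}:\text{wt}(\mathbf{b})\geq r+1\}\right)$, since this direction follows directly from the structure of monomial evaluations. It suffices to check that each spanning vector of $\mathcal{B}_{m,r}$ lies in the target span. Fix $S\subseteq [m]$ with $|S|\geq r+1$. The value of the monomial $\prod_{i\in S}x_i$ at the evaluation point indexed by $\mathbf{b}=(b_1,\ldots,b_m)$ is $\prod_{i\in S}b_i$, which equals $1$ precisely when $b_i=1$ for every $i\in S$. Hence
\[
\text{Eval}\left(\prod_{i\in S}x_i\right)=\sum_{\mathbf{b}:\, b_i=1\ \forall i\in S} \mathbf{e}_{\mathbf{b}}.
\]
Every index $\mathbf{b}$ appearing in this sum has at least $|S|$ coordinates equal to $1$, so $\text{wt}(\mathbf{b})\geq |S|\geq r+1$, meaning each $\mathbf{e}_{\mathbf{b}}$ on the right is one of the generators of the target span. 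Thus the whole evaluation vector lies in $\text{span}\left(\{\mathbf{e}_{\mathbf{b}}:\text{wt}(\mathbf{b})\geq r+1\}\right)$, and since this holds for every element of $\mathcal{B}_{m,r}$, the claimed inclusion follows.

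Finally, because one space is contained in the other and both have the same finite dimension $\binom{m}{\geq r+1}$, they coincide, which is exactly the assertion of the lemma. There is no genuine obstacle here: the only point requiring care is the elementary observation that the evaluation of a degree-$|S|$ monomial is supported entirely on coordinates of weight at least $|S|$, together with the dimension equality that allows a single inclusion to close the argument. One could instead prove the reverse inclusion directly by inclusion–exclusion, expressing each $\mathbf{e}_{\mathbf{b}}$ as a sum (over $\mathbb{F}_2$) of monomial evaluations ranging over the supersets of the support of $\mathbf{b}$, but the dimension argument renders this unnecessary.
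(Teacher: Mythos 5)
Your proof is correct, but it runs the inclusion in the opposite direction from the paper's. The paper proves $\text{span}\left(\{\mathbf{e}_{\mathbf{b}}: \text{wt}(\mathbf{b})\geq r+1\}\right) \subseteq \tilde{\mathcal{C}}(m,r)$: for each $\mathbf{b}$ of weight at least $r+1$ it exhibits the explicit indicator polynomial $f = \prod_{i\in \text{supp}(\mathbf{b})}x_i\cdot \prod_{j\notin \text{supp}(\mathbf{b})}(1+x_j)$, observes that every monomial of $f$ has degree at least $\text{wt}(\mathbf{b})\geq r+1$ (this is precisely the inclusion--exclusion over supersets of $\text{supp}(\mathbf{b})$ that you mention at the end and deliberately set aside), and concludes $\mathbf{e}_{\mathbf{b}} = \text{Eval}(f)\in \text{span}(\mathcal{B}_{m,r})$. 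You instead prove $\tilde{\mathcal{C}}(m,r)\subseteq \text{span}\left(\{\mathbf{e}_{\mathbf{b}}: \text{wt}(\mathbf{b})\geq r+1\}\right)$ by the simpler observation that the evaluation vector of a monomial $\prod_{i\in S}x_i$ with $|S|\geq r+1$ is supported only on points of weight at least $|S|$. Both arguments then close identically, matching the dimension ${m\choose \geq r+1}$ on the two sides, and both rely on the same prior fact (asserted before the lemma) that $\mathcal{B}_{m,r}$ is linearly independent. Your direction is arguably the more elementary computation, since it requires no polynomial construction, only reading off a support; the paper's direction has the advantage of directly producing the containment $\mathbf{e}_{\mathbf{b}}\in \tilde{\mathcal{C}}(m,r)$ that is the form of the lemma actually invoked later (in the proofs of the information-set lemma and the coset lemma), which your argument recovers only indirectly through the dimension count. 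Either way, the lemma is fully established.
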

\begin{proof}
	
	Note that any standard basis vector $\mathbf{e}_{\mathbf{b}}$, with  wt$(\mathbf{b})\geq r+1$, can be written as Eval$(f)$, where 
	\[
	f(x_1,\ldots,x_m) = \prod_{i\in \text{supp}(\mathbf{b})}x_i \cdot \prod_{i\notin \text{supp}(\mathbf{b})}(1+x_j).
	\]
	From the fact that wt$(\mathbf{b})\geq r+1$, it holds that the degree of any monomial in $f$ is at least $r+1$, and hence, Eval$(f) = \mathbf{e}_\mathbf{b} \in \text{span}(\mathcal{B}_{m,r}) = \tilde{\mathcal{C}}({m,r})$. The result follows by noting that $\{\mathbf{e}_{\mathbf{b}}: \text{wt}(\mathbf{b})\geq r+1\}$ is a collection of linearly independent vectors, of size ${m\choose \geq r+1}$, which, in turn, equals dim$\left(\tilde{\mathcal{C}}({m,r})\right)$.
\end{proof}
\begin{lemma}
	\label{lem:infoset}
	An information set of $\text{RM}(m,r)$ is the set of coordinates $\mathcal{I}_{m,r}:= \{\mathbf{b} = (b_1,\ldots,b_m)\in \mathbb{F}_2^m: \text{wt}(\mathbf{b})\leq r\}$.
\end{lemma}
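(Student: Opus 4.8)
The plan is to deduce the claim from the decomposition of $\mathbb{F}_2^{2^m}$ furnished by the monomial basis, together with the description of $\tilde{\mathcal{C}}(m,r)$ already obtained in Lemma \ref{lem:quotient}. I would first note that $\lvert \mathcal{I}_{m,r}\rvert = \binom{m}{\leq r} = \dim\bigl(\text{RM}(m,r)\bigr)$, so the cardinality is correct and it only remains to check that all $\binom{m}{\leq r}$-tuples appear in these coordinates. Reformulating the definition of an information set, this is equivalent to the statement that the restriction map $\pi$, sending each codeword of $\text{RM}(m,r)$ to its entries in the coordinates indexed by $\mathcal{I}_{m,r}$, is a bijection. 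As the source and target have equal (finite) dimension $\binom{m}{\leq r}$, it suffices to prove that $\pi$ is injective, i.e.\ that the only codeword of $\text{RM}(m,r)$ vanishing on every coordinate of $\mathcal{I}_{m,r}$ is $\mathbf{0}$.

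Next I would record the direct-sum decomposition $\mathbb{F}_2^{2^m} = \text{RM}(m,r)\oplus \tilde{\mathcal{C}}(m,r)$. This follows because the $2^m$ evaluation vectors $\text{Eval}\bigl(\prod_{i\in S}x_i\bigr)$, as $S$ ranges over all subsets of $[m]$, are linearly independent and hence form a basis of $\mathbb{F}_2^{2^m}$; splitting them according to $\lvert S\rvert \leq r$ versus $\lvert S\rvert \geq r+1$ yields spanning sets for $\text{RM}(m,r)$ and $\tilde{\mathcal{C}}(m,r)$ respectively, with dimensions adding up as $\binom{m}{\leq r} + \binom{m}{\geq r+1} = 2^m$. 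In particular this forces $\text{RM}(m,r)\cap \tilde{\mathcal{C}}(m,r) = \{\mathbf{0}\}$.

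The key step then invokes Lemma \ref{lem:quotient}: it identifies $\tilde{\mathcal{C}}(m,r) = \text{span}\bigl(\{\mathbf{e}_{\mathbf{b}}:\text{wt}(\mathbf{b})\geq r+1\}\bigr)$, which is precisely the coordinate subspace consisting of all vectors supported on $\{\mathbf{b}:\text{wt}(\mathbf{b})\geq r+1\} = \mathcal{I}_{m,r}^{c}$, i.e.\ all vectors that are $0$ on every coordinate of $\mathcal{I}_{m,r}$. Consequently, any codeword of $\text{RM}(m,r)$ that vanishes on $\mathcal{I}_{m,r}$ is supported on $\mathcal{I}_{m,r}^{c}$ and therefore lies in $\tilde{\mathcal{C}}(m,r)$; belonging simultaneously to $\text{RM}(m,r)$, it must be $\mathbf{0}$ by the trivial intersection above. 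This establishes injectivity of $\pi$, hence bijectivity, and the lemma follows.

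I do not expect a serious obstacle here; the argument is short once Lemma \ref{lem:quotient} is in hand. The only points deserving care are the clean translation between the paper's definition of an information set (all $K$-tuples appearing in the chosen coordinates, equivalently full column rank) and the injectivity-of-restriction criterion, and the correct reading of Lemma \ref{lem:quotient} as saying that $\tilde{\mathcal{C}}(m,r)$ is exactly the space of vectors supported on the high-weight coordinates $\mathcal{I}_{m,r}^{c}$. The linear independence of the full monomial evaluation basis, underlying the direct-sum claim, is standard and can be cited from the basic theory of Reed--Muller codes.
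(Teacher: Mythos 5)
Your proof is correct and takes essentially the same route as the paper: both hinge on Lemma \ref{lem:quotient} identifying $\tilde{\mathcal{C}}(m,r)$ as the coordinate subspace supported on $\mathcal{I}_{m,r}^{c}$, combined with the fact that $\tilde{\mathcal{C}}(m,r)$ complements $\text{RM}(m,r)$ in $\mathbb{F}_2^{2^m}$. Your injectivity-of-the-restriction-map formulation is simply a basis-free restatement of the paper's argument, which performs Gaussian elimination on the stacked full-rank matrix $\mathsf{H}$ to conclude that $G_{\text{Lex}}(m,r)$ restricted to the columns in $\mathcal{I}_{m,r}$ has full rank; if anything, your version makes the ``clearly'' step in the paper's elimination argument more explicit.
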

\begin{proof}
	 In order to prove that $\mathcal{I}_{m,r}$ is an information set of RM$(m,r)$, it is sufficient to show that $G_{\text{Lex}}(m,r)$ restricted to the columns in $\mathcal{I}_{m,r}$ is of full rank.
	
	Now, consider the generator matrix $\tilde{G}(m,r)$, of $\tilde{\mathcal{C}}(m,r)$, consisting of rows that are vectors in $\mathcal{B}_{m,r}$. We build the $2^m\times 2^m$ matrix $$\mathsf{H}:= \begin{bmatrix} \begin{array}{c}
		\tilde{G}(m,r)\\
		\hline\\
		G_{\text{Lex}}(m,r)
		\end{array}
	\end{bmatrix},$$
with $\mathsf{H}$ being full rank. Note that, from Lemma \ref{lem:quotient}, any standard basis vector $\mathbf{e}_{\mathbf{b}}$, with $\mathbf{b} \in \mathcal{I}_{m,r}^c$, belongs to rowspace$(\tilde{G}(m,r))$. By Gaussian elimination, it is then possible to replace the first ${m\choose \geq r+1}$ rows of $\mathsf{H}$, corresponding to the submatrix $\tilde{G}(m,r)$, with the standard basis vectors $\mathbf{e}_\mathbf{b}$, with $\mathbf{b}\in \mathcal{I}_{m,r}^c$. Clearly, from the fact that $\mathsf{H}$ is full rank, this then means that $\mathsf{H}\left[\left[{m\choose \geq r+1}+1:2^m\right],\mathcal{I}_{m,r}\right]$ is full rank, or, $G_{\text{Lex}}(m,r)$, restricted to columns in $\mathcal{I}_{m,r}$, is full rank.
\end{proof}

Now that we have identified an information set $\mathcal{I}_{m,r_m}$ of $\mathcal{C}_m(R) = $ RM$(m,r_m)$, we need only calculate the number of disjoint $(d+1)$-tuples of consecutive coordinates in $\mathcal{I}_{m,r_m}$. We introduce the notation \textbf{B}$(i)$ to denote the length-$m$ binary representation of $i$, for $0\leq i\leq 2^m-1$. We also define a ``run'' of coordinates belonging to a set $\mathcal{A}\in \{0,1\}^m$, to be a contiguous collection of coordinates, $\left(i,i+1,\ldots,i+\ell\right)$, such that $\mathbf{B}(j)\in \mathcal{A}$, for all $i\leq j\leq i+\ell$, and $\mathbf{B}(i-1),\mathbf{B}(i+\ell+1)\notin \mathcal{A}$, where $i \in [0:2^m-1-\ell]$. Further, the length of such a run of coordinates is exactly $\ell$. 

We shall first compute the number of runs of consecutive coordinates, in the lexicographic ordering, which belong to the information set $\mathcal{I}_{m,r_m}$. Formally, if we define

\begin{figure*}%
	\centering

	\includegraphics[width=0.8\textwidth]{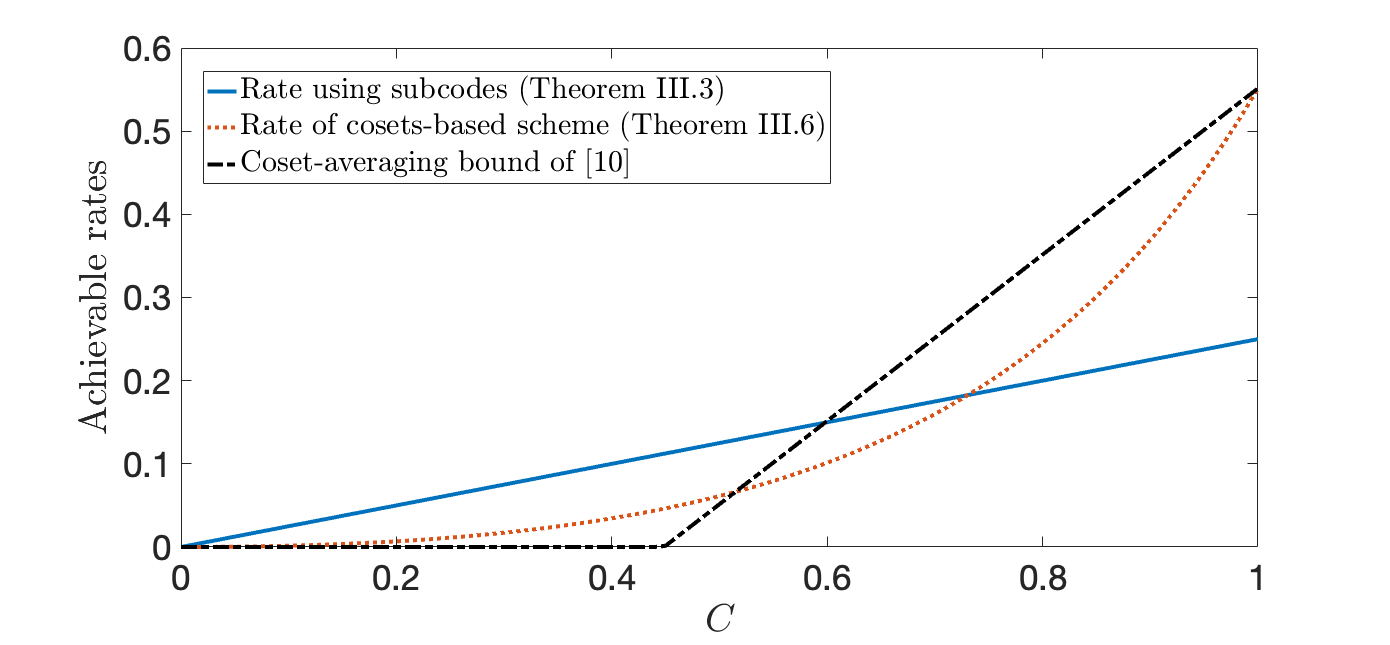}%
	\caption{Plot comparing, for $d=2$, the rate lower bound of $C/4$ achieved using subcodes, from Theorem \ref{thm:rm}, the rate lower bound achieved using Theorem \ref{thm:rmcosets}, with $\tau = 50$, and the coset-averaging lower bound of $\max(0,C_0^{(2)}+C-1)$, of \cite{pvk}. Here, the noiseless capacity, $C_0^{(2)} \approx 0.5515$.}
	\label{fig:second}%
\end{figure*}

\begin{align}
	\label{eq:gammam}
\Gamma_{m,r_m} := \{s: \textbf{B}(s+1)\notin \mathcal{I}_{m,r_m},\ &\text{and}\ \textbf{B}(s-p),\ldots,\textbf{B}(s)\in \mathcal{I}_{m,r_m}, \notag\\&\text{ for some $p\geq 0$}\},
\end{align}
to be the set of right end-point coordinates of runs that belong to $\mathcal{I}_{m,r_m}$, then the required number of runs is $\left \lvert \Gamma_{m,r_m}\right \rvert$.
\begin{lemma}
	\label{lem:runs}
	Under the lexicographic ordering, it holds that $\left \lvert \Gamma_{m,r}\right \rvert={m-1\choose r}$, for $0\leq r\leq m-1$.
\end{lemma}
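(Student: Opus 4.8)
The plan is to reduce the counting of runs to an elementary weight-change computation under the lexicographic ordering. First I would record that, under the lexicographic order, the coordinate indexed by $\mathbf{B}(s)$ is precisely the length-$m$ binary representation of the integer $s$, with $b_1$ the most significant bit; consequently, passing from coordinate $s$ to coordinate $s+1$ corresponds exactly to incrementing the integer $s$ by one. By the definition in \eqref{eq:gammam} (with $p=0$ permitted), an index $s$ lies in $\Gamma_{m,r}$ precisely when $\mathbf{B}(s)\in\mathcal{I}_{m,r}$ and $\mathbf{B}(s+1)\notin\mathcal{I}_{m,r}$, that is, when $\text{wt}(\mathbf{B}(s))\leq r$ while $\text{wt}(\mathbf{B}(s+1))\geq r+1$.

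The key step is to control how the Hamming weight changes under incrementing by one. I would write $\mathbf{B}(s)$ as a string terminating in exactly $j$ trailing $1$s preceded by a $0$ (with $j\geq 0$); incrementing $s$ then clears these $j$ trailing ones and sets the preceding $0$ to a $1$, so that $\text{wt}(\mathbf{B}(s+1)) = \text{wt}(\mathbf{B}(s)) - j + 1$. Substituting this identity into the two membership conditions gives $\text{wt}(\mathbf{B}(s))\leq r$ together with $\text{wt}(\mathbf{B}(s)) - j + 1 \geq r+1$, i.e. $\text{wt}(\mathbf{B}(s))\geq r+j$. These two inequalities together force $j = 0$ and $\text{wt}(\mathbf{B}(s)) = r$; conversely, any $s$ for which $\text{wt}(\mathbf{B}(s)) = r$ and $\mathbf{B}(s)$ has no trailing $1$ manifestly satisfies both conditions. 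Hence $\Gamma_{m,r}$ is exactly the set of length-$m$ binary strings of weight $r$ whose final coordinate $b_m$ equals $0$.

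Finally, I would count these strings directly: fixing $b_m = 0$ forces the first $m-1$ coordinates to carry all $r$ of the ones, so there are $\binom{m-1}{r}$ of them, yielding $\lvert\Gamma_{m,r}\rvert = \binom{m-1}{r}$. I would also remark that the stated range $0\leq r\leq m-1$ is precisely what keeps the argument clean: the all-ones string $\mathbf{B}(2^m-1)$ has weight $m>r$ and is therefore never in $\mathcal{I}_{m,r}$, so no boundary term at $s=2^m-1$ requires separate treatment, and the extreme cases $r=0$ and $r=m-1$ each produce a single run, consistent with $\binom{m-1}{0}=\binom{m-1}{m-1}=1$. I do not anticipate any genuine obstacle here; the only point demanding care is the trailing-ones bookkeeping, namely verifying that the net weight change under incrementing is correctly $1-j$, which drives the entire characterization of $\Gamma_{m,r}$.
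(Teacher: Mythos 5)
Your proof is correct and takes essentially the same route as the paper's: both arguments characterize $\Gamma_{m,r}$ as exactly the set of weight-$r$ strings ending in $0$, and then count them as $\binom{m-1}{r}$. Your explicit weight-change identity $\mathrm{wt}(\mathbf{B}(s+1)) = \mathrm{wt}(\mathbf{B}(s)) - j + 1$ for $j$ trailing ones is just a quantified version of the paper's observation that incrementing a string ending in a $1$ cannot increase its weight, so the two proofs differ only in bookkeeping style (and your remark that the all-ones string never lies in $\mathcal{I}_{m,r}$ cleanly handles the boundary case the paper treats implicitly).
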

\begin{proof}
	Let $r\in [0:m-1]$. Note that every right end-point of a run, $s\in \Gamma_{m,r}$,  with $s\in [0:2^m-2]$, is such that wt$(\mathbf{B}(s))\leq r$, but wt$(\mathbf{B}(s+1))\geq r+1$. We now claim that an integer $s \in \Gamma_{m,r}$ iff \textbf{B}$(s)=(b_1,\ldots ,b_{m-1},0)$, for $b_1,\ldots,b_{m-1}\in \{0,1\}$, with wt$((b_1,\ldots ,b_{m-1},0)) = r$. 
	
	To see this, note that if \textbf{B}$(s)=(b_1,\ldots, b_{m-1},0)$, then \textbf{B}$(s+1) = (b_1,\ldots, b_{m-1},1)$. Hence, if wt$((b_1,\ldots, b_{m-1})) = r$, then $s\in \Gamma_{m,r}$. Conversely, if $s\in \Gamma_{m,r}$, then \textbf{B}$(s)$ cannot end in a $1$. Indeed, if this were the case, then we would have \textbf{B}$(s)$ being of the form $(b_1,\ldots, b_\ell, 0,1, \ldots, 1)$, with $b_1,\ldots,b_{\ell}\in \{0,1\}$, so that \textbf{B}$(s+1)$ would be $(b_1,\ldots, b_\ell, 1,0, \ldots, 0)$, the weight of which does not exceed that of \textbf{B}$(s)$. So, \textbf{B}$(s)$ must be of the form $(b_1, \ldots, b_{m-1}, 0)$, and so, \textbf{B}$(s+1) = (b_1,\ldots, b_{m-1},1)$. From wt$(\mathbf{B}(s)) \le r$ and wt$(\mathbf{B}(s+1)) \ge r+1$, we obtain that wt$(b_1 \ldots b_{m-1}) = r$. 
	
	This then implies that the number of runs, which is equal to the number of right end-points of runs, exactly equals ${m-1\choose r}$.
\end{proof}
With the ingredients in place, we are now in a position to prove Theorem \ref{thm:rmlinub}.
\begin{proof}[Proof of Theorem \ref{thm:rmlinub}]
	Fix a sequence of codes $\{{\mathcal{C}_m}(R) = \text{RM}(m,r_m)\}_{m\geq 1}$ that achieves a rate $R\in (0,1)$ over the unconstrained BMS channel, with $r_m\leq m-1$, for all $m$. We use the notation $K_m:={m\choose \leq r_m}$ to denote the dimension of ${{\mathcal{C}_m(R)}}$. 
		
	Now, for a given $m$, consider the information set $\mathcal{I}_{m,r_m}$ as in Lemma \ref{lem:infoset}. We know from Lemma \ref{lem:runs} that the number of runs under the lexicographic ordering, $\left \lvert \Gamma_{m,r_m}\right \rvert$, of coordinates that lie in $\mathcal{I}_{m,r_m}$, is exactly ${m-1\choose r_m}$. Now, note that the $i^{\text{th}}$ run $(s_i,\ldots,s_i+\ell_i)$, of length $\ell_i$, with $s_i\in \Gamma_{m,r_m}$ and $i\in \left[\left \lvert \Gamma_{m,r_m}\right \rvert\right]$, contributes $\left \lfloor \frac{\ell_i}{d+1}\right \rfloor$ disjoint $(d+1)$-tuples of consecutive coordinates in $\mathcal{I}_{m,r}$. It then holds that the overall number of disjoint $(d+1)$-tuples of consecutive coordinates in $\mathcal{I}_{m,r}$ is $t_m$, where
	\begin{align*}
		t_m &= \sum_{i=1}^{\left \lvert \Gamma_{m,r_m}\right \rvert}\left \lfloor \frac{\ell_i}{d+1} \right \rfloor\\
		&\geq \sum_{i=1}^{\left \lvert \Gamma_{m,r_m}\right \rvert} \left(\frac{\ell_i}{d+1} -1\right)\\
		&= \frac{K_m}{d+1} - \left \lvert \Gamma_{m,r_m} \right \rvert
		= \frac{K_m}{d+1} - {m-1 \choose r_m},
	\end{align*}
where the last equality follows from Lemma \ref{lem:runs}.
	
	Using Proposition \ref{prop:linub}, it follows that the dimension of any linear $(d,\infty)$-RLL subcode of ${\mathcal{C}_m(R)}$ is at most $K_m-dt_m$. It then holds that
	\begin{align*}
		\mathsf{R}^{(d,\infty)}_{{\mathcal{C}},\text{Lin}}(R)&=\limsup_{m\to \infty}\frac{\log_2\left \lvert \overline{\mathcal{C}}_{d}^{(m)}\right\rvert}{2^m}\\
		&\leq \limsup_{m\to \infty}\frac{K_m-dt_m}{2^m}\\
		&\leq \limsup_{m\to \infty}\frac{K_m-\frac{dK_m}{d+1}+d\cdot {m-1\choose r_m}}{2^m}\\
		&\leq \lim_{m\to \infty}\frac{\frac{K_m}{d+1}+d\cdot {m-1\choose \left \lfloor \frac{m-1}{2}\right \rfloor}}{2^m}\\
		&= \frac{R}{d+1},
	\end{align*}
	where the last equality holds from the fact that ${m-1\choose \left \lfloor \frac{m-1}{2}\right \rfloor} \sim c\cdot \frac{2^m}{\sqrt{m-1}}$ (see. for example, equation $(5.28)$ in \cite{asymp}, where `$\sim$' is used to mean ``grows as"), and $\lim_{m\to \infty} \frac{K_m}{2^m} = R$.
\end{proof}
\section{Alternative Coordinate Orderings}
\label{sec:perm}
Throughout the previous sections, we have assumed that the coordinates of the Reed-Muller code are ordered according to the standard lexicographic ordering. Since permutations of coordinates have the potential to convert a binary word that does not respect the $(d,\infty)$-RLL constraint to one that does, we ask the question if under alternative coordinate orderings, we can obtain linear $(d,\infty)$-RLL subcodes of RM codes, of rate larger than the upper bound in Theorem \ref{thm:rmlinub}.

First, we consider a Gray ordering of coordinates of the code RM$(m,r)$. In such an ordering, consecutive coordinates $\mathbf{b} = (b_1,\ldots,b_m)$ and $\mathbf{b}^\prime = (b_1^\prime,\ldots,b_m^\prime)$ are such that for some bit index $i\in [m]$, $b_i\neq b_i^\prime$, but $b_j = b_j^\prime$, for all $j\neq i$. In words, consecutive coordinates in a Gray ordering, when represented as $m$-tuples, differ in exactly one bit index. Note that multiple orderings are possible, which satisfy this property. We remark that any fixed Gray ordering can also be seen as a Hamiltonian path (see, for example, \cite{diestel}, Chap. 10) on the $m$-dimensional unit hypercube. 

In what follows, we work with a fixed sequence of Gray orderings defined as follows: let $(\pi_m^\text{G})_{m\geq 1}$ be a sequence of permutations, with $\pi_m^\text{G}: [0:2^m-1]\rightarrow [0:2^m-1]$, for any $m\geq 1$, having the property that \textbf{B}$(\pi_m^\text{G}(j))$ differs from \textbf{B}$(\pi_m^\text{G}(j-1))$ in exactly one bit index, for any $j\in [0:2^m-1]$. Here, again, \textbf{B}$(z)$ is the $m$-length binary representation of $z$, for $z\in [0:2^m-1]$. 

Now, fix a sequence of codes $\{{\mathcal{C}_m}(R) = \text{RM}(m,r_m)\}_{m\geq 1}$ that achieves a rate $R\in (0,1)$ over the unconstrained BMS channel, with $r_m\leq m-1$, for all $m$. We again use the notation $K_m:={m\choose \leq r_m}$ to denote the dimension of ${{\mathcal{C}_m(R)}}$. We then define the sequence of Gray-ordered RM codes $\{\mathcal{C}_m^{\text{G}}(R)\}_{m\geq 1}$, with
\begin{align*}\mathcal{C}_{m}^\text{G}(R):= \big\{(c_{\pi_m^\text{G}(0)},c_{\pi_m^\text{G}(1)}&,\ldots,c_{\pi_m^\text{G}(2^m-1)}):\\ &(c_0,c_1,\ldots,c_{2^m-1})\in {\mathcal{C}}_m(R)\big\}.\end{align*}
Clearly, the sequence of codes $\{\mathcal{C}_m^{\text{G}}(R)\}_{m\geq 1}$ also achieves a rate $R\in (0,1)$ over the unconstrained BMS channel. In order to obtain an upper bound on the rate of the largest linear $(d,\infty)$-RLL subcode of the code $\mathcal{C}_m^{\text{G}}(R)$, as in Section \ref{sec:rmublin}, we shall work with the same information set $\mathcal{I}_{m,r_m}$ as in Lemma \ref{lem:infoset}. Note that the coordinates of the Gray-ordered RM code are now represented by $m$-tuples, in which the $j^\text{th}$ coordinate from the beginning is \textbf{B}$(\pi_m^\text{G}(j-1))$.

Again, we define the set
\begin{align*}
\Gamma_{m,r_m}^\text{G}:=\{\pi_m^\text{G}(s):\ &\textbf{B}(\pi_m^\text{G}(s+1))\notin \mathcal{I}_{m,r_m},\ \textbf{B}(\pi_m^\text{G}(s-p)),\ldots,\\&\textbf{B}(\pi_m^\text{G}(s))\in \mathcal{I}_{m,r_m},\text{ for some $p\geq 0$}\},
\end{align*}
to be set of right end-point coordinates of runs that belong to $\mathcal{I}_{m,r_m}$, with the number of such runs being $\left \lvert \Gamma_{m,r_m}^\text{G}\right \rvert$.

We now state and prove a lemma analogous to Lemma \ref{lem:runs}:
\begin{lemma}
	\label{lem:runsgray}
	Under a fixed Gray ordering defined by $\pi_m^\text{G}$, it holds that $\left \lvert\Gamma_{m,r_m}^\text{G} \right \rvert \leq {m\choose r_m+1}$, for $0\leq r_m\leq m-1$.
\end{lemma}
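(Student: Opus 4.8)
The plan is to follow the template of the proof of Lemma~\ref{lem:runs}, but to replace the exact count by an injection argument, since a general Gray ordering admits no canonical description of where its runs begin and end. Concretely, I will injectively map the set $\Gamma_{m,r_m}^\text{G}$ of right end-points into the collection of weight-$(r_m+1)$ binary $m$-tuples, whose cardinality is exactly $\binom{m}{r_m+1}$.

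First I would pin down the weight profile at a right end-point. If $\pi_m^\text{G}(s)\in \Gamma_{m,r_m}^\text{G}$, then by definition $\mathbf{B}(\pi_m^\text{G}(s))\in \mathcal{I}_{m,r_m}$ and $\mathbf{B}(\pi_m^\text{G}(s+1))\notin \mathcal{I}_{m,r_m}$, i.e. $\text{wt}(\mathbf{B}(\pi_m^\text{G}(s)))\leq r_m$ while $\text{wt}(\mathbf{B}(\pi_m^\text{G}(s+1)))\geq r_m+1$. The defining feature of the Gray ordering is that $\mathbf{B}(\pi_m^\text{G}(s+1))$ and $\mathbf{B}(\pi_m^\text{G}(s))$ differ in exactly one bit, so their weights differ by exactly one. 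A weight decrease would keep the successor below $r_m+1$, a contradiction; hence the single flipped bit passes from $0$ to $1$, forcing $\text{wt}(\mathbf{B}(\pi_m^\text{G}(s)))=r_m$ and $\text{wt}(\mathbf{B}(\pi_m^\text{G}(s+1)))=r_m+1$. In particular $s\leq 2^m-2$, so the successor position $s+1$ is always well-defined.

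Next I would define $\phi\colon\Gamma_{m,r_m}^\text{G}\to\{\mathbf{b}\in\mathbb{F}_2^m:\text{wt}(\mathbf{b})=r_m+1\}$ by sending the right end-point $\pi_m^\text{G}(s)$, which occurs at the unique position $s$, to the successor vector $\mathbf{B}(\pi_m^\text{G}(s+1))$; the previous step guarantees this target has weight $r_m+1$. Injectivity is immediate: both $\mathbf{B}$ and $\pi_m^\text{G}$ are bijections of $[0:2^m-1]$, so $\mathbf{B}(\pi_m^\text{G}(s+1))=\mathbf{B}(\pi_m^\text{G}(s'+1))$ forces $s+1=s'+1$, and hence the two right end-points coincide. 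Since $\phi$ embeds $\Gamma_{m,r_m}^\text{G}$ into a set of size $\binom{m}{r_m+1}$, the claimed bound $\left\lvert\Gamma_{m,r_m}^\text{G}\right\rvert\leq\binom{m}{r_m+1}$ follows.

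The only delicate point is the single-bit-flip weight argument, which is exactly what pins a right end-point to weight $r_m$ and its successor to weight $r_m+1$; everything else is bookkeeping with the permutation $\pi_m^\text{G}$. I do not expect an exact count analogous to Lemma~\ref{lem:runs} to be attainable for an arbitrary Gray ordering, but the injection already yields the upper bound needed to feed into the rate estimate proving Theorem~\ref{thm:grayinf}.
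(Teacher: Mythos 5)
Your proof is correct and follows essentially the same route as the paper's: both hinge on the single-bit-flip property of the Gray ordering forcing the successor of a right end-point to have weight exactly $r_m+1$, and then bound $\lvert\Gamma_{m,r_m}^\text{G}\rvert$ by the number of weight-$(r_m+1)$ coordinates. Your explicit injection $\phi$ (and the observation that the end-point itself must have weight exactly $r_m$) merely spells out the counting step the paper leaves implicit, which is a welcome tightening rather than a different argument.
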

\begin{proof}
	As before, every run of coordinates that belong to $\mathcal{I}_{m,r_m}$ has a right end point, which is an integer $\pi_m^\text{G}(s)\in [0:2^m-2]$, such that wt$(\mathbf{B}(\pi_m^\text{G}(s)))\leq r_m$, but wt$(\mathbf{B}(\pi_m(s+1)))\geq r_m+1$. Now, under the Gray ordering, since consecutive coordinates differ in exactly one bit index, it can be seen that an integer $\pi_m^\text{G}(s) \in \Gamma_{m,r_m}^\text{G}$, only if wt$(\mathbf{B}(\pi_m^\text{G}(s+1)))= r_m+1$. Thus, the number of runs is bounded above by ${m\choose r_m+1}$, which is the number of appearances of coordinates whose binary representation has weight exactly $r_m+1$.
\end{proof}

With Lemma \ref{lem:runsgray} established, we now embark on a proof of Theorem \ref{thm:grayinf}.
\begin{proof}[Proof of Theorem \ref{thm:grayinf}]
	Similar to the proof of Theorem \ref{thm:rmlinub}, the calculation of the overall number, $t_m^\text{G}$, of disjoint $(d+1)$-tuples of consecutive coordinates in $\mathcal{I}_{m,r_m}$, results in
	\begin{align*}
		t_m^\text{G} \geq \frac{K_m}{d+1}-{m\choose r_m+1}.
	\end{align*}
	Again, using Proposition \ref{prop:linub}, it follows that the dimension of any linear $(d,\infty)$-RLL subcode of ${\mathcal{C}_m^\text{G}(R)}$ is at most $K_m-dt_m^\text{G}$. Now, we recall the definition of $\mathsf{R}^{(d,\infty)}_{{\mathcal{C^\text{G}}},\text{Lin}}(R)$, from equation \eqref{eq:Rubgray}. It then holds that $\mathsf{R}^{(d,\infty)}_{{\mathcal{C^\text{G}}},\text{Lin}}(R)$ obeys:
	\begin{align*}
		\mathsf{R}^{(d,\infty)}_{{\mathcal{C^\text{G}}},\text{Lin}}(R)&\leq \limsup_{m\to \infty}\frac{K_m-dt_m^\text{G}}{2^m}\\
		&\leq \limsup_{m\to \infty}\frac{K_m-\frac{dK_m}{d+1}+d\cdot {m\choose r_m+1}}{2^m}\\
		&\leq \lim_{m\to \infty}\frac{\frac{K_m}{d+1}+d\cdot {m\choose \left \lfloor \frac{m}{2}\right \rfloor}}{2^m}\\
		&= \frac{R}{d+1},
	\end{align*}
	where the last equality holds for reasons similar to those in the proof of Theorem \ref{thm:rmlinub}. 
\end{proof}

Now, we shift our attention to permuted RM codes $\{\mathcal{C}_m^\pi(R)\}_{m\geq 1}$, defined by the sequence of permutations $(\pi_m)_{m\geq 1}$, with $\pi_m: [0:2^m-1]\to [0:2^m-1]$ (see the discussion preceding Theorem \ref{thm:genperminf} in Section \ref{sec:main}). Also recall the definition of $\overline{\mathcal{C}}_{d,\pi}^{(m)}$ be the largest \emph{linear} $(d,\infty)$-RLL subcode of $\mathcal{C}_m^\pi(R)$.

We shall now prove Theorem \ref{thm:genperminf}.
\begin{proof}[Proof of Theorem \ref{thm:genperminf}]
	We wish to prove that for ``most'' orderings, and for large $m$, it holds that the rate of $\overline{\mathcal{C}}_{d,\pi}^{(m)}$ is bounded above by $\frac{R}{d+1}+\delta_m$, where $\delta_m \xrightarrow{m\to \infty}0$.
	
	To this end, we first make the observation that the sequence of RM codes $\{\mathcal{C}_m(R) = \text{RM}(m,r_m)\}$ achieves a rate $R$ over the BEC, under block-MAP decoding too (see \cite{kud1} and \cite{kud3}). Hence, for large enough $m$, the (linear) RM code $\mathcal{C}_m(R)$ can correct erasures that are caused by a BEC$(1-R-\gamma_m)$, with $\gamma_m>0$, and $\gamma_m\xrightarrow{m\to \infty}0$. This then means that for large $m$, $\mathcal{C}_m(R)$ can correct $2^m(1-R-\gamma_m)-\delta\cdot \sqrt{2^m (1-R-\gamma_m)}$ erasures, with high probability (see Lemma 15 of \cite{abbesw}). Finally, from Corollary 18 of \cite{abbesw}, it then holds that for large enough $m$, any collection of $2^m R(1+\alpha_m)$ columns of $G_\text{Lex}(m,r_m)$, chosen uniformly at random, must have full row rank, $K_m$, with probabilty $1-\delta_m$, with $\alpha_m, \delta_m>0$ and $\alpha_m,\delta_m\xrightarrow{m\to \infty}0$.

	In other words, the discussion above implies that for large enough $m$, a collection of $K_m(1+\alpha_m)$ coordinates, chosen uniformly at random, contains an information set, with probabilty $1-\delta_m$. Viewing the above statement differently, it can be argued that for large enough $m$, for a $1-\delta_m$ fraction of the possible permutations $\pi_m: [0:2^m-1]\to [0:2^m-1]$, the first block of $K_m(1+\alpha_m)$ coordinates of the code $\mathcal{C}_{m}^\pi(R)$, contains an information set, ${\mathcal{J}}_{m,r_m}$. Now, within these ``good'' permutations, since $|{\mathcal{J}}_{m,r_m}| = K_m$, it follows that the number of runs, $\left\lvert\Gamma_{m,r_m}^\pi\right \rvert$, of consecutive coordinates that belong to ${\mathcal{J}}_{m,r_m}$, obeys $\left\lvert\Gamma_{m,r_m}^\pi\right \rvert \leq K_m\alpha_m$, with $\Gamma_{m,r_m}^\pi$ defined similar to equation \eqref{eq:gammam}. This is because, the number of runs, $\left\lvert\Gamma_{m,r_m}^\pi\right \rvert$, equals the number of coordinates $s$,  such that $\mathbf{B}(\pi_m(s))\notin {\mathcal{J}}_{m,r_m}$, but $\mathbf{B}(\pi_m(s-1))\in {\mathcal{J}}_{m,r_m}$, and the number of such $s$ is at most $K_m(1+\alpha_m)-K_m$, which equals $K_m\alpha_m$.
	
	Hence, the overall number, $t_m^\pi$, of disjoint $(d+1)$-tuples of consecutive coordinates in $\mathcal{J}_{m,r}$, satisfies (see the proof of Theorem \ref{thm:rmlinub})
	\[
	t_m^\pi\geq \frac{K_m}{d+1}-K_m\alpha_m,
	\]
	for a $1-\delta_m$ fraction of permutations $\pi_m$. Again, applying Proposition \ref{prop:linub}, it holds that for a $1-\delta_m$ fraction of permutations, with $\delta_m\xrightarrow{m\to \infty}0$, the rate of the largest $(d,\infty)$-RLL subcode obeys
	\begin{align*}
		\frac{\log_2\left \lvert \overline{\mathcal{C}}_{d,\text{G}}^{(m)}\right\rvert}{2^m}&\leq \frac{K_m-dt_m^\text{G}}{2^m}\\
		&\leq \frac{K_m-\frac{dK_m}{d+1}+dK_m\alpha_m}{2^m}\\
		&= \frac{R}{d+1}+o(1),
	\end{align*}
	thereby showing what we set out to prove.
\end{proof}
\section{Achievable Rates Using Cosets Of RM Codes}
\label{sec:cosets}
The results summarized in the previous sections provide lower and upper bounds on achievable rates by using subcodes of RM codes. In particular, Theorem \ref{thm:rm} (Theorem III.2 of \cite{arnk22arxiv}) shows that, using subcodes of RM codes, rates of up to $2^{-\left \lceil \log_2(d+1)\right \rceil} \cdot C$ are achievable over $(d,\infty)$-RLL input-constrained BMS channels. In this section, we provide another construction, which uses cosets of RM codes. The rates achieved by this construction, under bit-MAP decoding, are better than those in Theorem \ref{thm:rm}, for low noise regimes of the BMS channel. For example, for the case where $d=1$, the new coding scheme offers better rates for erasure probabilities $\epsilon\lessapprox 0.2837$, for the BEC, and for  crossover probabilities $p\in (0,0.0392)\cup(0.9608,1)$, for the BSC). In what follows, we set $N_m:=2^m$.

Fix a rate $R\in (0,C)$ and any sequence $\{{\mathcal{C}}_m(R) = \text{RM}(m,r_m)\}_{m\geq 1}$ that achieves a rate $R$ over the unconstrained BMS channel, under bit-MAP decoding. We interchangeably index the coordinates of any codeword in ${\mathcal{C}}_m(R)$ by $m$-tuples in the lexicographic order, and by integers in $[0:2^m-1]$. Recall, from Lemma \ref{lem:infoset}, that the set $\mathcal{I}_{m,r_m}:= \{\mathbf{b} = (b_1,\ldots,b_m)\in \mathbb{F}_2^m: \text{wt}(\mathbf{b})\leq r_m\}$ is an information set of ${\mathcal{C}}_m(R)$. For the remainder of this section, we let $m$ be a large positive integer.

We set $K_m=$ dim$({\mathcal{C}_m(R)}) = {m\choose \leq r_m}$. For large $m$, it holds that 
\begin{align}
	\label{eq:km}
	K_m &\in [(1-\alpha_m)N_mR,(1+\alpha_m)N_mR],\ \text{and}\notag\\
	N_m-K_m&\in [(1-\beta_m)N_m(1-R),(1+\beta_m)N_m(1-R)],
\end{align}
for $\alpha_m, \beta_m>0$, with $\alpha_m, \beta_m\xrightarrow{m\to \infty} 0$. 

For the purposes of our coding scheme, we shall work with specific permutations of the codes $\{{\mathcal{C}}_m(R)\}_{m\geq 1}$. Consider any permutation $\pi_m: [0:N_m-1]\to [0:N_m-1]$ with the property that $\pi_m([0:K_m-1]) = \mathcal{I}_{m,v_m}$, where, for a permutation $\sigma$, and a set $\mathcal{A}\subseteq [0:N_m-1]$, we define the notation $\sigma(\mathcal{A}):=\{\sigma(i):i\in \mathcal{A}\}$. As in Section \ref{sec:main}, we define the permuted code ${\mathcal{C}}^{\pi}_m(R)$ as
\begin{align*}
{\mathcal{C}}^{\pi}_m(R) = \big\{(c_{\pi_m(0)},c_{\pi_m(1)}&,\ldots,c_{\pi_m(N_m-1)}):\\ &(c_0,c_1,\ldots,c_{N_m-1})\in {\mathcal{C}}_m(R)\big\}.
\end{align*}

Thus, ${\mathcal{C}}^{\pi}_m(R)$ is the code obtained by permuting the coordinates of codewords in ${\mathcal{C}}_m(R)$, such that the coordinates in the information set $\mathcal{I}_{m,r_m}$ occur in the first block of $K_m$ positions. Note that the permuted code ${\mathcal{C}}^{\pi}_m(R)$ is systematic, in that all possible $K_m$-tuples of binary symbols can occur in its first $K_m$ coordinates, and in particular, all $K_m$-tuples that respect that $(d,\infty)$-RLL constraint, occur in these coordinates. We let $G^\pi_m$ be a \emph{systematic} generator matrix for ${\mathcal{C}}^{\pi}_m(R)$. For the lemma that follows, we shall use the notation 

\begin{align*}\tilde{\mathcal{C}}^\pi_m:=\big\{(\tilde{c}_{\pi_m(0)},&\tilde{c}_{\pi_m(1)},\ldots,\tilde{c}_{\pi_m(N_m-1)}:\\ &(\tilde{c}_0,\tilde{c}_1,\ldots,\tilde{c}_{N_m-1})\in \tilde{\mathcal{C}}(m,r_m)\big\},\end{align*}
where $\tilde{\mathcal{C}}(m,r_m) = \text{span}\left(\mathcal{B}_{m,r_m}\right)$ (see equation \eqref{eq:Bmr} in Section \ref{sec:rmublin}). 
\begin{lemma}
	\label{lem:coset}
	For every codeword $\mathbf{w}\in {\mathcal{C}}^{\pi}_m(R)$, there exists a vector $\mathbf{v}\in \tilde{\mathcal{C}}^\pi_m$, such that $\mathbf{w}+\mathbf{v}$ (over $\mathbb{F}_2$) equals the concatenation $w_1^{K_m}\mathbf{0}$.
\end{lemma}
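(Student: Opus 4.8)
The plan is to pin down $\tilde{\mathcal{C}}^\pi_m$ explicitly as a coordinate subspace, after which the existence of $\mathbf{v}$ becomes immediate. First I would apply Lemma \ref{lem:quotient}, which tells us that $\tilde{\mathcal{C}}(m,r_m) = \text{span}\left(\{\mathbf{e}_{\mathbf{b}}: \text{wt}(\mathbf{b})\geq r_m+1\}\right)$. Since the span of a collection of standard basis vectors is exactly the set of all vectors supported on the corresponding coordinates, this says that $\tilde{\mathcal{C}}(m,r_m)$ consists precisely of those vectors in $\mathbb{F}_2^{N_m}$ whose support lies in $\{\mathbf{b}: \text{wt}(\mathbf{b})\geq r_m+1\} = \mathcal{I}_{m,r_m}^c$. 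Equivalently, $\tilde{\mathcal{C}}(m,r_m)$ is exactly the set of vectors that vanish on every coordinate of the information set $\mathcal{I}_{m,r_m}$, and are otherwise arbitrary.

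Next I would track the effect of the permutation $\pi_m$. By the construction preceding the lemma, $\pi_m([0:K_m-1]) = \mathcal{I}_{m,r_m}$, and hence $\pi_m([K_m:N_m-1]) = \mathcal{I}_{m,r_m}^c$. Carrying the coordinate-support characterization above through the permutation, we find that $\tilde{\mathcal{C}}^\pi_m$ is exactly the set of vectors in $\mathbb{F}_2^{N_m}$ supported on the last $N_m-K_m$ positions; that is, every element of $\tilde{\mathcal{C}}^\pi_m$ has the form $\mathbf{0}\,\mathbf{u}$ with $\mathbf{u}\in \mathbb{F}_2^{N_m-K_m}$ arbitrary, where $\mathbf{0}$ denotes the all-zero vector on the first $K_m$ coordinates, and conversely every such vector lies in $\tilde{\mathcal{C}}^\pi_m$.

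Finally I would assemble the conclusion. Given any $\mathbf{w}\in {\mathcal{C}}^{\pi}_m(R)$, write $\mathbf{w} = w_1^{K_m}\mathbf{w}'$, where $\mathbf{w}'\in \mathbb{F}_2^{N_m-K_m}$ collects its last $N_m-K_m$ entries. Taking $\mathbf{v} = \mathbf{0}\,\mathbf{w}'$, which belongs to $\tilde{\mathcal{C}}^\pi_m$ by the previous step, we get over $\mathbb{F}_2$ that $\mathbf{w}+\mathbf{v} = w_1^{K_m}(\mathbf{w}'+\mathbf{w}') = w_1^{K_m}\mathbf{0}$, as required. The only point demanding care here is the bookkeeping of the permutation, namely verifying that the ``vanishes on $\mathcal{I}_{m,r_m}$'' description of $\tilde{\mathcal{C}}(m,r_m)$ transforms into ``vanishes on the first $K_m$ positions'' for $\tilde{\mathcal{C}}^\pi_m$ under $\pi_m([0:K_m-1]) = \mathcal{I}_{m,r_m}$; once this coordinate-subspace picture is secured, the rest is a one-line cancellation with no real obstacle.
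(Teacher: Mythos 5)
Your proof is correct and follows essentially the same route as the paper's: both invoke Lemma \ref{lem:quotient} together with the property $\pi_m([0:K_m-1]) = \mathcal{I}_{m,r_m}$ to conclude that $\tilde{\mathcal{C}}^\pi_m$ contains every vector supported on the last $N_m-K_m$ coordinates, and then choose $\mathbf{v}$ to cancel the tail of $\mathbf{w}$. Your version merely makes explicit the coordinate-subspace characterization that the paper phrases as ``taking suitable linear combinations'' of the standard basis vectors, so there is no substantive difference.
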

\begin{proof}
	The proof is a simple consequence of Lemma \ref{lem:quotient}. Indeed, since the last $N_m-K_m$ coordinates in the permuted code ${\mathcal{C}}^{\pi}_m(R)$ are exactly those coordinates $\mathbf{b}\in \{0,1\}^m$ such that $\mathbf{b}\notin  \mathcal{I}_{m,r_m}$, i.e., with wt$(\mathbf{b})\geq r_m+1$, we have that any standard basis vector with a $1$ in these coordinates belongs to $\tilde{\mathcal{C}}^\pi_m$. By taking suitable linear combinations of these standard basis vectors, it is possible to obtain a word $\mathbf{v}\in \tilde{\mathcal{C}}^\pi_m$ such that $w_{K_m+1}^{N_m} = v_{K_m+1}^{N_m}$, with $v_{1}^{K_m} = \mathbf{0}$. Hence, it holds that $\mathbf{w}+\mathbf{v} = w_1^{K_m}\mathbf{0}$, over $\mathbb{F}_2$.
\end{proof}
\begin{remark}
	Note that words $\mathbf{v} \in \tilde{\mathcal{C}}^\pi_m$, which are of the form $\mathbf{v} = 0^{K_m} v_{K_m+1}^{N_m}$, for some $v_{K_m+1},\ldots,v_{N_m}\in \{0,1\}$,
are in one-to-one correspondence with the cosets of ${\mathcal{C}}^{\pi}_m(R)$. In other words, each word in $\tilde{\mathcal{C}}^\pi_m$ uniquely identifies a coset of ${\mathcal{C}}^{\pi}_m(R)$. In what follows, we consider $ \tilde{\mathcal{C}}^\pi_m$ to be the collection of coset leaders for the code ${\mathcal{C}}^{\pi}_m(R)$.
\end{remark}

We now describe a simple encoding strategy to transmit $(d,\infty)$-RLL input-constrained words over the BMS channel:
\begin{enumerate}
	\item Pick a $(d,\infty)$-RLL constrained $K_m$-tuple,  $w_1^{K_m}$. Encode $w_1^{K_m}$ into a codeword $\mathbf{c} \in \mathcal{C}_m^\pi(R)$, using the systematic generator matrix $G_m^\pi$, with $ \mathbf{c} = w^{K_m}_1 G_m^\pi$. Note that $c_1^{K_m} = w_1^{K_m}$.
	\item Choose a coset leader $\mathbf{v}\in \tilde{\mathcal{C}}^\pi_m$ such that the word, $\mathbf{c}+\mathbf{v} = w_1^{K_m}\mathbf{0}$, is also $(d,\infty)$-RLL constrained.
	\item Transmit the first $K_m$ bits, $w_1^{K_m}$, of $\mathbf{c}+\mathbf{v}$.
	\item Transmit the identity of the coset leader.
\end{enumerate}
Choosing an RLL constrained word in Step 1 above can be accomplished using well-known constrained encoders (see, for example, \cite{adler} and Chapters 4 and 5 of \cite{Roth}), of rates arbitrarily close to the noiseless capacity, $C_0^{(d)}$, of the $(d,\infty)$-RLL constraint. Further, Lemma \ref{lem:coset} shows that Step 2 can also be achieved. Step 4 will be explained further below. At the decoder end, the coset leader $\mathbf{v}$ is recovered first, and this information is used to decode the original codeword, $\mathbf{c} \in  {\mathcal{C}}^{\pi}_{m}(R)$.

We now elaborate on Step 4, in more detail. Our objective is to use extra channel uses that encode the last $N_m-K_m$ bits of $\mathbf{v}$, which uniquely identify the coset leader, into a $(d,\infty)$-RLL input-constrained word, and transmit this input-constrained word to the decoder. Observe, from Lemma \ref{lem:coset}, that $v_{K_m+1}^{N_m} = c_{K_m+1}^{N_m}$. 

Now, by Theorem \ref{thm:rm}, we can identify $(d,\infty)$-RLL subcodes of RM codes of rate $R$, which achieve rates of up to $2^{-\left \lceil \log_2(d+1)\right \rceil}\cdot R$. We shall use these subcodes to encode the last $N_m-K_m$ bits of $\mathbf{v}$. We mention that since $m$ is large, the rate of the subcode ${\mathcal{C}}_m^{(d,\infty)}(R)$ (see Theorem \ref{thm:rm}), which we write as $R_m^{(d,\infty)}$, obeys
\begin{align}
	\label{eq:rmdinf}
	R_m^{(d,\infty)}\in \left[2^{-\left \lceil \log_2(d+1)\right \rceil}\cdot R (1-\gamma_m),2^{-\left \lceil \log_2(d+1)\right \rceil}\cdot R (1+\gamma_m)\right],
\end{align}
for $\gamma_m>0$, with $\gamma_m\xrightarrow{m\to \infty} 0$.

Since our objective is to encode the $N_m-K_m$ bits identifying the coset leader using subcodes of RM codes, we require that the blocklength after encoding is a power of $2$. To facilitate this, we first divide the $N_m-K_m$ bits to be encoded into smaller parts, each of which will be separately encoded into a $(d,\infty)$-RLL constrained codeword of an RM code.

In particular, having chosen a large $m$, we identify a large, fixed, positive integer $\tau$, and a positive integer $L$, such that 
\begin{align}
	\label{eq:L}
	\left[\frac{(1-R)(1-\beta_m)}{R(1+\gamma_m)}, \frac{(1-R)(1+\beta_m)}{R(1-\gamma_m)}\right] \subseteq \left[\frac{L-1}{2^\tau}, \frac{L}{2^\tau}\right].
\end{align}

We chop up the last $N_m-K_m$ bits of $\mathbf{v}$ into $L$ equal parts, with each part having $\frac{N_m-K_m}{L}$ bits (see the remark below). We shall use $(d,\infty)$-RLL subcodes of RM codes to now encode each of these $L$ parts. We then have that the number of channel uses, $N_{\text{exact}}$, needed to transmit each part using a $(d,\infty)$-RLL RM subcode of rate $R_m^{(d,\infty)}$, is $\frac{N_m-K_m}{L\cdot R_m^{(d,\infty)}}$, which from equations \eqref{eq:km}--\eqref{eq:L}, satisfies
\begin{align}
	N_{\text{exact}} &\leq 2^{m-\tau+\left \lceil \log_2(d+1)\right \rceil}\notag\\
	&=: N_{\text{part}}, \label{eq:part}
\end{align}
where we have used the fact that $N_m=2^m$. Since we need the blocklength to be a power of $2$, we use $N_{\text{part}}$ channel uses to transmit each of the $L$ parts into which the $N_m-K_m$ bits have been divided. 
The total number of channel uses needed to convey the identity of the coset leader is thus $N_{\text{part}}\cdot L$. We set $n:= m-\tau+\left \lceil \log_2(d+1)\right \rceil$, with $N_{\text{part}} = 2^{n}$. Thus, step 4 of the encoding strategy can be replaced by the following two steps:

\begin{itemize}
	\item [4a)] Divide $c_{K_m+1}^{N_m}$ into $L$ equal parts, ${\mathbf{c}}_1,\ldots, {\mathbf{c}}_L$.
	\item [4b)] Encode each part $\mathbf{c}_i$, for $i\in [L]$, into a codeword of the code ${\mathcal{C}}^{(d,\infty)}_{n}(R)$ (see equation \eqref{eq:rmlb1}), of blocklength $2^n = N_{\text{part}}$.
\end{itemize} 
\begin{remark}
	For ease of reading, we assume that $m$ is such that $L$ divides $N_m-K_m$. However, the general case can be handled by appending at most $L-1$ $0$s at the end of the $N_m-K_m$ bits, so that the overall length is divisible by $L$, thereby giving rise to the same lower bound in Lemma \ref{lem:ratecoset}
\end{remark}

The construction of our code $\mathcal{C}^{\text{cos}}_m$ is given in Algorithm \ref{alg:coset}, with the assumption that $L$ divides $N_m-K_m$.
We let a generator matrix of the linear code ${\mathcal{C}}^{(d,\infty)}_{m}(R)$ in Theorem \ref{thm:rm} be denoted by $G_{m}^{(d)}$.

\begin{algorithm}[h]
	\caption{Construction of $(d,\infty)$-RLL constrained code $\mathcal{C}^{\text{cos}}_m$}
	\label{alg:coset}
	\begin{algorithmic}[1]	
		\Procedure{Coding-Scheme}{$G^{\pi}_m$, $G_n^{(d)}$}       
		\State Pick a $(d,\infty)$-RLL constrained $K_m$-tuple $w_1^{K_m}$.
		\State Obtain $\mathbf{c} \in \mathcal{C}_m^\pi(R)$ as $\mathbf{c} = w^{K_m}_1 G_m^\pi$, with $c_1^{K_m} = w_1^{K_m}$.
		\State Set $\mathbf{x}_1:= w_1^{K_m}$.
		\State Divide $c_{K_m+1}^{N_m}$ into $L$ equal parts, ${\mathbf{c}}_1,\ldots, {\mathbf{c}}_L$.
		\For{$i=1:L$}
		\State Set $\mathbf{x}_{2,i} = {\mathbf{c}}_i G_n^{(d)}$.
		\EndFor
		\State Set $\mathbf{x}_2=\mathbf{x}_{2,1}\ldots \mathbf{x}_{2,L}$.
		\State Transmit $\mathbf{x} = \mathbf{x}_1\mathbf{x}_2$.
		\EndProcedure
		
	\end{algorithmic}
\end{algorithm} 

We note from the construction of ${\mathcal{C}}^{(d,\infty)}_{m}(R)$ in \eqref{eq:rmlb1} that the first $d$ symbols in $\mathbf{x}_{2,i}$ are $0$s, for all $i\in [L]$. Hence, the $(d,\infty)$-RLL input constraint is satisfied at the boundaries of the concatenations in steps 8 and 9, too.

The rate of the coding scheme in Algorithm \ref{alg:coset} is summarized in the lemma below.
\begin{lemma}
	\label{lem:ratecoset}
	The rate of the coding scheme in Algorithm \ref{alg:coset} satisfies
	\[
	\liminf_{m\to \infty} \text{rate}(\mathcal{C}_m^{\text{cos}})\geq \frac{C_0^{(d)}\cdot R^2\cdot 2^{-\left \lceil \log_2(d+1)\right \rceil}}{R^2\cdot 2^{-\left \lceil \log_2(d+1)\right \rceil} + 1-R+2^{-\tau}},
	\]
where $C_0^{(d)}$ is the noiseless capacity of the $(d,\infty)$-RLL input constraint, and $\tau$ is an arbitrarily large, fixed, positive integer.
\end{lemma}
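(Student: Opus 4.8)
The plan is to compute the overall rate of the scheme as the ratio of the number of message bits conveyed to the total number of channel uses, and then take the limit as $m \to \infty$. Writing $z := \left\lceil \log_2(d+1) \right\rceil$, observe that the message is carried entirely by the $(d,\infty)$-RLL word $w_1^{K_m}$ of length $K_m$ chosen in Step~1: using a constrained encoder of rate arbitrarily close to the noiseless capacity $C_0^{(d)}$, the number of distinct messages is $2^{K_m C_0^{(d)}(1-o(1))}$, so the number of information bits is $K_m C_0^{(d)}(1 - o(1))$. The remaining steps (transmitting $\mathbf{x}_1$ and the coset-identifying blocks $\mathbf{x}_{2,1},\ldots,\mathbf{x}_{2,L}$) carry no additional message information; they only ensure decodability of the chosen word.

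Next I would tally the total blocklength. The transmitted word $\mathbf{x} = \mathbf{x}_1\mathbf{x}_2$ has length $K_m$ from $\mathbf{x}_1$ and $L \cdot N_{\text{part}}$ from $\mathbf{x}_2$, where each of the $L$ parts is encoded into a block of length $N_{\text{part}} = 2^n = 2^{m - \tau + z}$ by \eqref{eq:part}. Hence the total number of channel uses is $N_{\text{total}} = K_m + L\,2^{m-\tau+z}$, and the rate is
\[
\text{rate}(\mathcal{C}_m^{\text{cos}}) = \frac{K_m\, C_0^{(d)}(1-o(1))}{K_m + L\,2^{m-\tau+z}} = \frac{(K_m/2^m)\,C_0^{(d)}(1-o(1))}{(K_m/2^m) + (L/2^\tau)\,2^z}.
\]

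I would then pass to the limit using the estimates already in hand. By \eqref{eq:km}, $K_m/2^m \to R$ as $\alpha_m \to 0$, while the containment \eqref{eq:L} together with $\beta_m,\gamma_m \to 0$ yields $\limsup_m (L/2^\tau) \le \frac{1-R}{R} + 2^{-\tau}$, the $2^{-\tau}$ slack coming from rounding $(1-R)/R$ up to a multiple of $2^{-\tau}$. Since the numerator converges to $R\,C_0^{(d)}$, the liminf of the rate is $R\,C_0^{(d)}$ divided by $\limsup_m$ of the denominator, giving
\[
\liminf_{m\to\infty}\text{rate}(\mathcal{C}_m^{\text{cos}}) \ge \frac{R\,C_0^{(d)}}{R + \left(\frac{1-R}{R} + 2^{-\tau}\right)2^z}.
\]
Multiplying numerator and denominator by $R\,2^{-z}$ recasts this as $\frac{C_0^{(d)} R^2 2^{-z}}{R^2 2^{-z} + (1-R) + R\,2^{-\tau}}$, and since $R<1$ one may bound $R\,2^{-\tau} \le 2^{-\tau}$ in the denominator to reach the claimed inequality.

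The main obstacle is bookkeeping rather than conceptual: one must track the three vanishing sequences $\alpha_m, \beta_m, \gamma_m$ against the fixed parameter $\tau$, and first verify that for all large $m$ an integer $L$ satisfying \eqref{eq:L} actually exists — this holds because the left-hand interval of \eqref{eq:L} has length tending to $0$, hence is eventually shorter than $2^{-\tau}$ and fits inside some dyadic interval $[(L-1)/2^\tau,\,L/2^\tau]$. The one genuinely delicate point is that the $2^{-\tau}$ term in the final denominator is precisely the residue of this rounding, so one must bound $L/2^\tau$ \emph{from above} uniformly, ensuring the inequality is valid for the liminf and not merely in a heuristic limiting sense.
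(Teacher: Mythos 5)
Your proposal is correct and follows essentially the same route as the paper's own proof: the same count of $K_m\bigl(C_0^{(d)}-o(1)\bigr)$ message bits carried by $w_1^{K_m}$, the same total of $K_m + L\cdot 2^{m-\tau+\left\lceil \log_2(d+1)\right\rceil}$ channel uses, and the same use of \eqref{eq:km} and \eqref{eq:L} to bound $L/2^\tau$ by $\frac{1-R}{R}+2^{-\tau}$ before passing to the liminf. If anything, you are slightly more careful than the paper at two points it glosses over — verifying that an integer $L$ satisfying \eqref{eq:L} exists for large $m$, and bounding the residual $R\,2^{-\tau}\le 2^{-\tau}$ in the denominator, a step the paper's final display presents as an equality.
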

\begin{proof}
	Recall that the noiseless capacity, $C_0^{(d)}$, of the $(d,\infty)$-RLL constraint, is given by (see, for example, \cite{Roth})
	\begin{align}
		C_0^{(d)} &= \lim_{n\to \infty}\frac{\log_2|S_{(d,\infty)}^{(n)}|}{n} \notag\\
		&= \inf_n \frac{\log_2|S_{(d,\infty)}^{(n)}|}{n}, \label{eq:inf}
	\end{align}
	where the last equality follows from the subadditivity of the sequence $\left(\log_2|S_{(d,\infty)}^{(n)}|\right)_{n\geq 1}$. 
	
	By picking $m$ large enough (and hence $K_m$ large enough), we note that for step 2 of Algorithm 1, there exist constrained coding schemes (see \cite{adler} and Chapters 4 and 5 of \cite{Roth}) of rate $C_0^{(d)}- \epsilon_m$, for $\epsilon_m>0$, with $\epsilon_m\xrightarrow{m\to \infty} 0$. Hence, we see that for large $m$, the number of possible $K_m$-tuples, $w_1^{K_m}$, that can be picked, equals $2^{K_m(C_0^{(d)}-\epsilon_m)}$. Since the codeword $\mathbf{c}$ and the words $\mathbf{x}_1$ and $\mathbf{x}_2$ are determined by $\mathbf{w}$, it holds that for large $m$, the rate of the code $\mathcal{C}_m^{\text{cos}}$ obeys
	\begin{align*}
	\label{eq:ratecalc}
	\text{rate}(\mathcal{C}_m^{\text{cos}}) \geq \frac{\log_2\left(2^{K_m(C_0^{(d)}-\epsilon_m)}\right) }{K_m+N_{\text{part}}\cdot L},
\end{align*}
where the denominator, $K_m+N_{\text{part}}\cdot L$, is the total number of channel uses. The following statements then hold true:
\begin{align*}
	\text{rate}(\mathcal{C}_m^{\text{cos}}) &\geq \frac{\log_2\left(2^{K_m(C_0^{(d)}-\epsilon_m)}\right)}{K_m+N_{\text{part}}\cdot L}\\
	&\stackrel{(a)}{=} \frac{\frac{\left(C_0^{(d)}-\epsilon_m\right)\cdot{K_m}}{N_m}}{\frac{K_m}{N_m}+L\cdot 2^{-\tau+\left \lceil \log_2(d+1)\right \rceil}}\\
	&\stackrel{(b)}{\geq} \frac{\frac{\left(C_0^{(d)}-\epsilon_m\right)\cdot{K_m}}{N_m}}{\frac{K_m}{N_m}+2^{\left \lceil \log_2(d+1)\right \rceil}\cdot \left(\frac{(1-R)(1-\beta_m)}{R(1+\gamma_m)}+2^{-\tau}\right)},
\end{align*}
where (a) follows from equation \eqref{eq:part} and (b) holds due to equation \eqref{eq:L}, with $L\cdot 2^{-\tau}\leq \left(\frac{(1-R)(1-\beta_m)}{R(1+\gamma_m)}+2^{-\tau}\right)$. Hence, by taking $\liminf_{m\to \infty}$ on both sides of the inequality (b) above, we get
\begin{align*}
	\liminf_{m\to \infty} \text{rate}(\mathcal{C}_m^{\text{cos}}) &\geq \frac{C_0^{(d)}\cdot R}{R+2^{\left \lceil \log_2(d+1)\right \rceil}\cdot \left(\frac{1-R}{R}\right)+2^{\left \lceil \log_2(d+1)\right \rceil-\tau}}\\
	&= \frac{C_0^{(d)}\cdot R^2\cdot 2^{-\left \lceil \log_2(d+1)\right \rceil}}{R^2\cdot 2^{-\left \lceil \log_2(d+1)\right \rceil} + 1-R+2^{-\tau}},
\end{align*}
where the inequality holds since $\frac{K_m}{N_m}\xrightarrow{m\to \infty}R$ and $\epsilon_m,\beta_m, \gamma_m\xrightarrow{m\to \infty}0$.

\end{proof}
The proof of Theorem \ref{thm:rmcosets} follows by noting that any rate $R\in (0,C)$ is achievable by RM codes over an unconstrained BMS channel, under bit-MAP decoding, and by substituting $C$ instead of $R$ in Lemma \ref{lem:ratecoset}.

\section{Conclusion}
\label{sec:conclusion}
In this paper, we derived upper bounds on the rates of linear $(d,\infty)$-RLL subcodes of Reed-Muller (RM) codes. Our work, therefore, provides upper bounds on achievable rates using linear subcodes of RM codes, over binary memoryless symmetric (BMS) channels with $(d,\infty)$-RLL constrained inputs. 
We showed that if $C$ is the capacity of an unconstrained BMS channel, then the rate of any linear $(d,\infty)$-RLL subcode of an RM code, is bounded above by $\frac{C}{d+1}$, in the limit as the blocklength of the code goes to infinity. A discussion about RM codes under coordinate orderings different from the lexicographic ordering was also taken up. In particular, we showed that for linear $(d,\infty)$-RLL subcodes of RM codes under a Gray ordering, the same upper bound holds, and that for large enough blocklength, for nearly all coordinate orderings, a rate upper bound of $\frac{C}{d+1}+\delta$ holds, where $\delta$ can be taken to be as small as required.
Further, we devised a constrained coding scheme based on cosets of RM codes that, for low noise regimes, outperforms any linear coding scheme, in terms of rate. For values of $C$ close to $1$, the rate of our coding scheme is also close to the coset-averaging bound of \cite{pvk}.




For future work, as regards the cosets-based coding scheme proposed in this paper, other sequential decoding algorithms (such as those in \cite{honda}), adapted to RM codes, can be explored to check if the need for extra channel uses, for exchanging coset information, can be eliminated altogether.
\section{Acknowledgements}
The authors would like to thank Prof. Henry Pfister for stimulating discussions.



\ifCLASSOPTIONcaptionsoff
  \newpage
\fi



%
\bibliographystyle{IEEEtran}
{\footnotesize
	\bibliography{references}}

%

%





\end{document}